%% file: TEXT.tex
\documentclass[a4paper,fleqn]{cas-dc}

\usepackage[authoryear,longnamesfirst]{natbib}

\def\tsc#1{\csdef{#1}{\textsc{\lowercase{#1}}\xspace}}
\tsc{WGM}
\tsc{QE}

\usepackage{graphicx}    
\usepackage{amssymb}
\usepackage{amsmath}
\usepackage{makecell}

\usepackage{textcomp}
\usepackage{stfloats}
\usepackage{url}
\usepackage{verbatim}
\usepackage{subcaption}
\usepackage{xcolor}
\newtheorem{theorem}{Theorem}
\newtheorem{definition}{Definition}
\newtheorem{lemma}{Lemma}

\newtheorem{remark}{Remark}

\usepackage{tikz}
\usetikzlibrary{positioning,calc}
\usepackage{tikzscale}
\usetikzlibrary{arrows, positioning, calc}
\usepackage{environ}

\newcommand{\RePart}[1]{\mathrm{Re}\{#1\}}
\newcommand{\ImPart}[1]{\mathrm{Im}\{#1\}}

\makeatletter
\newsavebox{\measure@tikzpicture}
\NewEnviron{scaletikzpicturetowidth}[1]{%
	\def\tikz@width{#1}%
	\def\tikzscale{1}\begin{lrbox}{\measure@tikzpicture}%
		\BODY
	\end{lrbox}%
	\pgfmathparse{#1/\wd\measure@tikzpicture}%
	\edef\tikzscale{\pgfmathresult}%
	\BODY
    }

\begin{document}
\let\WriteBookmarks\relax
\def\floatpagepagefraction{1}
\def\textpagefraction{.001}

\title [mode = title]{Revisiting the generalized first-order reset element with shaping filters}                      
\tnotemark[1]

\tnotetext[1]{This work is co-financed by ASMPT and Holland High Tech, Topsector High Tech Systems and Materials, with a PPS innovation grant public-private collaboration for research and development.}

\author[1]{Ali Hosseini}
\cormark[1]
\ead{s.a.hosseini@tudelft.nl}

\credit{Conceptualization of this study, Methodology, Software}

\affiliation[1]{organization={Department of Precision and Microsystems Engineering, Delft University of Technology},
            city={Delft},
            postcode={2628 CD}, 
            country={The Netherlands}}

\author[2]{Dragan Kosti\'c}
\ead{dragan.kostic@asmpt.com}

\author[1]{Hassan HosseinNia}
\ead{s.h.hosseinniakani@tudelft.nl}

\affiliation[2]{organization={ASMPT},
            city={Beuningen},
            postcode={6641 TL}, 
            country={The Netherlands}}

\cortext[cor1]{Corresponding author}


\begin{abstract}
Reset control provides a nonlinear approach for improving closed-loop performance beyond the limitations of linear time-invariant controllers. However, the reset action inevitably introduces higher-order harmonics, which may degrade tracking performance, distort the reset signal, and reduce the reliability of frequency-domain predictions obtained via describing-function analysis. This paper revisits the generalized first-order reset element with shaping filters and develops a systematic framework for suppressing undesired reset-induced nonlinearities. Analytical conditions are derived for shaping filter coefficients to increase the low-frequency attenuation slope of the magnitude of the higher-order sinusoidal input describing functions (HOSIDFs). By modifying the asymptotic attenuation behavior of these higher-order harmonics, the proposed design provides stronger harmonic suppression in frequency regions where reset action is undesired, while preserving the beneficial first-order harmonic phase advantage near the desired cross-over frequency. The reduction in nonlinear behavior is verified through HOSIDF analysis and a superposition-law test, demonstrating that higher-order shaping filters make the reset element behave more closely to a linear system at a certain range of frequencies. Experimental validation on an industrial motion stage demonstrates improved tracking performance, reduced higher-order harmonic content, and selective activation of the reset action in the intended frequency region.
\end{abstract}



\begin{keywords}
Reset control systems\sep Nonlinear control\sep Frequency-domain analysis\sep Precision motion systems \sep Nonlinearity shaping
\end{keywords}

\maketitle

\input{Sections/01Introduction.tex}
\input{Sections/02Preliminaries.tex}

\input{Sections/03Smart_Shaping_Filters.tex}

\input{Sections/04Simulation.tex}

\input{Sections/05Experiments}
\input{Sections/06Conclusion}
\appendix
\input{Sections/08Appendix_B}
\input{Sections/07Appendix_A}


\bibliographystyle{cas-model2-names}
\bibliography{TEXT}


\end{document}

%% file: Sections/01Introduction.tex
\section{Introduction}\label{Sec: Introduction}
Linear time-invariant (LTI) controllers, despite their widespread use, are inherently constrained by fundamental performance limitations, including the waterbed effect and Bode's gain-phase relationship \cite{freudenberg2000surveyBodeGain, waterbed2012}. Reset control has been introduced as a promising nonlinear control framework that has been shown to partially overcome such fundamental trade-offs \cite{clegg1958nonlinear,beker2004fundamental,guo2009frequency,ExpDemonst,ResetIntServo,secondOrderMarcel}.

The concept of the reset integrator \cite{clegg1958nonlinear} was extended to the first-order reset element (FORE) in \citep{HOROWITZFORE,KRISHNANFORE}. The FORE was later revisited in \citep{RevisitedZaccarian} from the perspectives of hybrid modeling and stability analysis. In \citep{guo2009frequency}, the generalized FORE (GFORE) was introduced, in which the post-reset state coefficient is allowed to take nonzero values. Among first-order reset elements, the GFORE has been widely used in reset control systems \citep{saikumar2019constant,hosseini2025AddOnFilterDesign}, as it provides additional tuning flexibility in terms of both the aggressiveness of the reset jumps and the frequency range over which reset action is activated. This motivates the focus of this study on this class of reset elements.

The frequency-domain analysis of reset elements using the sinusoidal-input describing function (SIDF) method was presented in \cite{guo2009frequency}. This framework was subsequently extended in \cite{saikumar2021loop,LukeLure} through the introduction of the higher-order SIDF (HOSIDF) \cite{NUIJ20061883}, along with a frequency-domain approach for relating open-loop characteristics to closed-loop dynamics in reset control systems.

While reset elements can provide favorable first-order harmonic behavior, as captured by the SIDF and exploited in the Constant-in-Gain, Lead-in-Phase (CgLp) element \cite{saikumar2019constant,hosseini2025AddOnFilterDesign}, the higher-order harmonics generated by reset action are often detrimental and may degrade closed-loop performance. To address this issue, several studies have employed the HOSIDF framework to shape the higher-order harmonics content while preserving the SIDF or maintaining it within a prescribed range \cite{cai2020optimal,karbasizadeh2022continuous,hosseini2025robust}. It was shown in \cite{cai2020optimal} that the influence of HOSIDFs can be mitigated without altering the SIDF by suitably arranging the loop components. Furthermore, \cite{karbasizadeh2022continuous,hosseini2025robust} apply a pre- and post-filtering approach, adapted from \cite{1161990,heertjes2009performance}, to reset control systems, thereby preserving the first-order harmonic response while enabling systematic shaping of the higher-order harmonic content of the error signal.

Although nonlinearities in the error signal can degrade closed-loop performance, nonlinearities in the reset signal, i.e., the reset-action command shown in Fig.~\ref{fig: reset block diagram}, are often more critical. In particular, undesired reset actions may lead to unreliable closed-loop behavior, especially when excessive reset instants occur. Moreover, the open-loop to closed-loop translation of HOSIDFs relies on the assumption that the reset action is primarily governed by the first-order harmonic of the reset signal \citep[Assumption~2]{saikumar2021loop}. However, since the reset signal inherently contains higher-order harmonics due to the nonlinear nature of the reset element, discrepancies inevitably arise between the assumed and actual reset instants. These observations motivated the work in \citep{BandpassNima}, where a method was proposed to shape the reset signal such that its nonlinear content is reduced, thereby improving the reliability of closed-loop performance prediction. The main idea was to confine the nonlinear behavior to the frequency range in which it is beneficial, while recovering linear dynamics in frequency regions where reset action is not required. This led to the introduction of shaping filters, for which the reset-action signal is no longer identical to the reset input signal, in contrast to conventional reset elements \cite{HallotFORE,RevisitedZaccarian,banos2012reset}.

Although the phase-shaping approach proposed in \cite{BandpassNima} provides an effective way to localize the reset-induced nonlinearity within a prescribed frequency band, it also increases the design and implementation complexity. Its performance depends on the accurate tuning and realization of the shaping filter, which may introduce sensitivity to modeling errors and practical implementation effects. While the method can effectively suppress undesired reset-induced harmonics, the added filtering dynamics may increase the controller order and reduce implementation simplicity. More importantly, the resulting HOSIDFs are mainly shifted downward in magnitude, while their overall frequency-dependent trend and low-frequency asymptotic behavior remain largely similar to those of the conventional reset element.
Motivated by these limitations, this paper revisits the GFORE with
shaping filters and develops a constructive design framework for
reducing undesired reset-induced nonlinearities. The novelty of the
proposed approach does not lie in the use of shaping filters alone, but
in deriving explicit algebraic conditions on the shaping-filter
coefficients that directly determine the low-frequency asymptotic
behavior of the HOSIDFs. In contrast to existing shaping-filter approaches that primarily attenuate
the HOSIDFs over selected frequency bands, the present work assigns the
low-frequency order of attenuation by imposing algebraic constraints on the
shaping-filter coefficients. This gives a direct synthesis rule for controlling
the asymptotic HOSIDF slope, rather than relying only on numerical phase
alignment or trial-and-error filter tuning.

The main contributions of this paper are summarized as follows. First,
a general shaping-filter structure is proposed for the GFORE, and
coefficient-level conditions are derived such that the low-frequency
attenuation slope of the HOSIDFs increases systematically with the
filter order. Second, the effect of the proposed shaping filters is
validated through HOSIDF analysis and a superposition-law assessment,
showing that the shaped reset element behaves closer to its base linear
system in frequency regions where reset action is not required. Third,
the proposed design is incorporated into a reset-based controller
and experimentally validated on an industrial wire-bonder motion stage.
The results demonstrate improved tracking performance, reduced
higher-order harmonic content, suppression of excessive reset actions,
and selective activation of the reset mechanism in the intended
frequency region.

The remainder of this paper is organized as follows. Section \ref{Sec: Preliminaries} presents preliminaries on reset elements with shaping filters and their frequency-domain representation. Section \ref{Sec: Smart SF} introduces the proposed shaping-filter design for reducing HOSIDFs. Section \ref{Sec: superposition law} validates the method through HOSIDF analysis and a superposition-law assessment. Section \ref{Sec: Experimental Example} presents the industrial wire-bonder case study and experimental validation. Finally, Section \ref{sec: conclusion} concludes the paper.

%% file: Sections/02Preliminaries.tex
\section{Preliminaries}\label{Sec: Preliminaries}
In this section, we introduce the reset element, including the shaping filter, in both the time and frequency domains.
\subsection{Reset element}
The reset element (depicted in Fig.~\ref{fig: reset block diagram}) is denoted by \( \mathcal{R} \) and defined as follows:  
\begin{equation}
\label{eq: reset state space}
\mathcal{R} := 
\begin{cases}
    \dot{x}_r(t) = A_r x_r(t) + B_r u_1(t), & \text{if } \left(x_r(t), e_r(t)\right) \notin \mathcal{F}, \\[5pt]
    x_r(t^+) = A_\rho x_r(t), & \text{if } \left(x_r(t), e_r(t)\right) \in \mathcal{F}, \\[5pt]
    u_r(t) = C_r x_r(t) + D_r u_1(t), &
\end{cases}
\end{equation}
where the state vector is \( x_r(t) \in \mathbb{R}^{n_r \times 1} \), and the post-reset state is \( x_r(t^{+}) \in \mathbb{R}^{n_r \times 1} \). The state-space matrices of the reset element are given by 
\( A_r \in \mathbb{R}^{n_r \times n_r} \), 
\( B_r \in \mathbb{R}^{n_r \times 1} \), 
\( C_r \in \mathbb{R}^{1 \times n_r} \), 
and \( D_r \in \mathbb{R} \). 
The reset matrix is denoted as 
\( A_\rho = \text{diag}(\gamma_1, \dots, \gamma_{n_r}) \), 
where \( -1 \leq \gamma_{i} \leq 1 \) for all \( i \in \{1,\dots,n_r\} \). 
The signals \( u_1(t) \in \mathbb{R} \) and \( u_r(t) \in \mathbb{R} \) represent the input and output of the reset element, respectively. Let \( C_s \) denote the LTI shaping filter
\begin{equation}
\label{eq: Cs dynamic ss}
C_s :=
\begin{cases}
    \dot{x}_z(t) = A_z x_z(t) + B_z u_1(t), \\[5pt]
    e_r(t) = C_z x_z(t) + D_z u_1(t),
\end{cases}
\end{equation}
with $x_z\in \mathbb{R}^{p\times 1}$, \( A_z \in \mathbb{R}^{p \times p} \), 
\( B_z \in \mathbb{R}^{p \times 1} \), 
\( C_z \in \mathbb{R}^{1 \times p} \), 
and \( D_z \in \mathbb{R} \). The signal \( e_r(t) \in \mathbb{R} \), referred to as the reset-triggering signal, determines when the state \( x_r \) is reset according to the reset surface \( \mathcal{F} \), defined as  

\begin{equation}
\label{reset surface}
\mathcal{F} := \{ e_r(t) = 0 \wedge (A_\rho - I) x_r(t) \neq 0 \}.
\end{equation}

\subsection{(Higher-Order) Sinusoidal Input Describing Functions}
The inherent nonlinearity of the reset element implies that its steady-state response to a sinusoidal input is non-sinusoidal. Consequently, its frequency response analysis often relies on approximations such as the SIDF method \cite{guo2009frequency}. However, because the SIDF accounts only for the first harmonic of the output while neglecting higher-order components, it may result in significant inaccuracies. To overcome this limitation, HOSIDFs were introduced for reset elements in \cite{saikumar2021loop}, and later extended to reset elements with shaping filters in \cite{ComplexKarbasi}.

\usetikzlibrary{arrows.meta, positioning, calc}

\newcommand{\figwidth}{0.3\columnwidth} 
\newcommand{\figfont}{\large}            

\begin{figure}
    \centering
    \resizebox{0.6\columnwidth}{!}{%
    \begin{tikzpicture}[
        >=Latex,
        font=\figfont,
        block/.style={
            draw,
            line width=1.2pt,
            minimum width=1.35cm,
            minimum height=1.15cm,
            inner sep=4pt
        },
        arr/.style={->, line width=1.2pt},
        darr/.style={->, dashed, line width=1.2pt}
    ]

        \coordinate (in)  at (0,0);
        \coordinate (tap) at (0.8,0);
        \coordinate (out) at (6.20,0);

        \node[block, minimum width=1.1cm, minimum height=1.0cm] (Cs) at (2.00,-1.25) {\LARGE{$C_s$}};
        \node[block, minimum width=1.4cm, minimum height=1.6cm] (R) at (4.70,0) {\huge{$\mathcal{R}$}};

        \draw[arr] (in) -- (R.west);
        \draw[arr] (R.east) -- (out);

        \draw[darr] (tap) -- (tap |- Cs.west) -- (Cs.west);

        \coordinate (erA) at ($(Cs.east)+(0.50,0)$); 
        \coordinate (erA2) at ($(Cs.east)+(0.50,0.8)$); 
        \coordinate (erA3) at ($(R.west)-(0.0,0.44)$); 
        \draw[darr] (Cs.east) -- (erA) -- (erA2) -- (erA3);
        \node[above] at (0.25,0.08) {$u_1$};
        \node[above] at (5.80,0.08) {$u_r$};
        \node[below] at (3.35,-0.50) {$e_r$};

    \end{tikzpicture}%
    }
    \caption{A reset element with the shaped reset signal. The resetting action is determined by $e_r(t)$.}\label{fig: reset block diagram}
\end{figure}

Thus, having the input of the reset element as $u_1(t)=\hat{u}_1\sin(\omega t)$, the output $u_r(t)$ can be described by the Fourier series:
\begin{equation}
    \label{eq: e to y}
u_r(t)=\sum_{n=1}^{\infty}\left|{H_{n}}(\omega)\right|\hat{u}_1\sin\left(n\omega t +\angle {H_{n}}(\omega)\right),
\end{equation}
with $n\in \mathbb{N}$, and ${H_{n}}(\omega)$ is SIDF and HOSIDFs of the reset element with shaping filter, which can be calculated as follows (see \cite[Section C]{ComplexKarbasi})
\begin{equation}
\label{eq: Hn}
H_{n}(\omega) =
\begin{cases}
C_r(A_r - j\omega I)^{-1}B_r\Theta_{\varphi}(\omega)\\
\qquad+ C_r(j\omega I - A_r)^{-1}B_r + D_r, & n=1, \\[6pt]
C_r(A_r - j\omega n I)^{-1}B_r\Theta_{\varphi}(\omega), & \text{odd } n \ge 2, \\[6pt]
0, & \text{even } n \ge 2,
\end{cases} \\[6pt]
\end{equation}
with
\begin{align}
\label{eq: HOSIDFs all functions}
\Theta_{\varphi}(\omega) &=
\frac{-2j\omega}{\pi}\,
\Omega(\omega)\Upsilon(\omega)\Lambda^{-1}(\omega) \nonumber\\[0pt]
\Upsilon(\omega)&=e^{j\varphi(\omega)}\Big(\omega I \cos\left(\varphi(\omega)\right) - A_r \sin\left(\varphi(\omega)\right)\Big)\nonumber\\
\Omega(\omega) &= \Delta(\omega) - \Delta(\omega)\Delta_{\rho}^{-1}(\omega) A_{\rho} \Delta(\omega)\nonumber \\[0pt]
\Lambda(\omega) &= \omega^{2}I + A_{r}^{2}\nonumber \\[0pt]
\Delta(\omega) &= I + e^{\tfrac{\pi}{\omega}A_{r}}\nonumber \\[0pt]
\Delta_{\rho}(\omega) &= I + A_{\rho} e^{\tfrac{\pi}{\omega}A_{r}}.\nonumber\\
& 
\end{align}
In the above equations, $\varphi(\omega) = \arg\left(C_s(j\omega)\right)$ denotes the phase shift in the reset signal relative to the case without a shaping filter. In the absence of a shaping filter, the reset occurs at the instants $t_k = \tfrac{k\pi}{\omega}$, where $\hat{u}_1 \sin(\omega t) = 0$. With the shaping filter included, however, the reset occurs at $t_k = \tfrac{k\pi - \varphi}{\omega}$, where $e_r(t)=\hat{u}_1 |C_s(j\omega)| \sin(\omega t + \varphi) = 0$.

A reset element follows its base linear system (BLS) dynamics if no reset happens $\left((x_r(t), e_r(t)) \notin \mathcal{F}\,\,\, \forall\,\,t \in \mathbb{R}_{\geq 0}\right)$. Thus, the transfer function of its BLS is defined as follows
\begin{equation}
\label{eq RCS bls}
R(j\omega) = C_r(j\omega I - A_r)^{-1}B_r + D_r.
\end{equation}

Having \(H_{ 1}(\omega)\) as the SIDF of the reset element, previous studies \cite{guo2009frequency, saikumar2021loop} have shown that although \(\left|H_{1}(\omega)\right|\) exhibits similar characteristics to its base linear system, it introduces less phase lag compared to it. This distinctive property of reset elements makes them a promising choice to overcome the Bode gain-phase limitation inherent in LTI systems. However, as discussed earlier, the SIDF considers only the first harmonic of the output while neglecting higher-order components, which can lead to significant inaccuracies. Therefore, it is of particular interest to reduce the magnitude of the HOSIDFs while preserving the favorable properties of the reset element (phase advantages). In the next section, we demonstrate how a carefully designed shaping filter, \(C_s(j\omega)\), can be used to attenuate the magnitude of HOSIDFs.

%% file: Sections/03Smart_Shaping_Filters.tex
\section{Shaping Filter Design for Reduction of HOSIDFs}
\label{Sec: Smart SF}
We consider a GFORE element defined as $A_r = -\omega_r$ with $\omega_r \in \mathbb{R}^+$, $B_r = 1$ and $C_r = \omega_r$. The direct feedthrough term \( D_r \) is set to zero in this illustration since it does not affect the HOSIDFs as expressed in \eqref{eq: Hn}. Fig. \ref{fig: GFORE wr 10 30 100} illustrates the first- and third-order SIDFs of the GFORE element for different corner frequencies \( \omega_r \).  It can be observed that although the magnitude of the third-order harmonic decreases at low frequencies, in many cases, a faster attenuation is desired (current attenuation is $40$ dB/dec). This reveals a design trade-off: a smaller \(\omega_r\) increases the useful phase-generation range of the corresponding CgLp element \cite{hosseini2025AddOnFilterDesign}, but it also amplifies the low-frequency higher-order harmonic content, thereby increasing the undesired nonlinear contribution in the same frequency range.
\begin{figure}
    \centering
    \includegraphics[width=1\columnwidth]{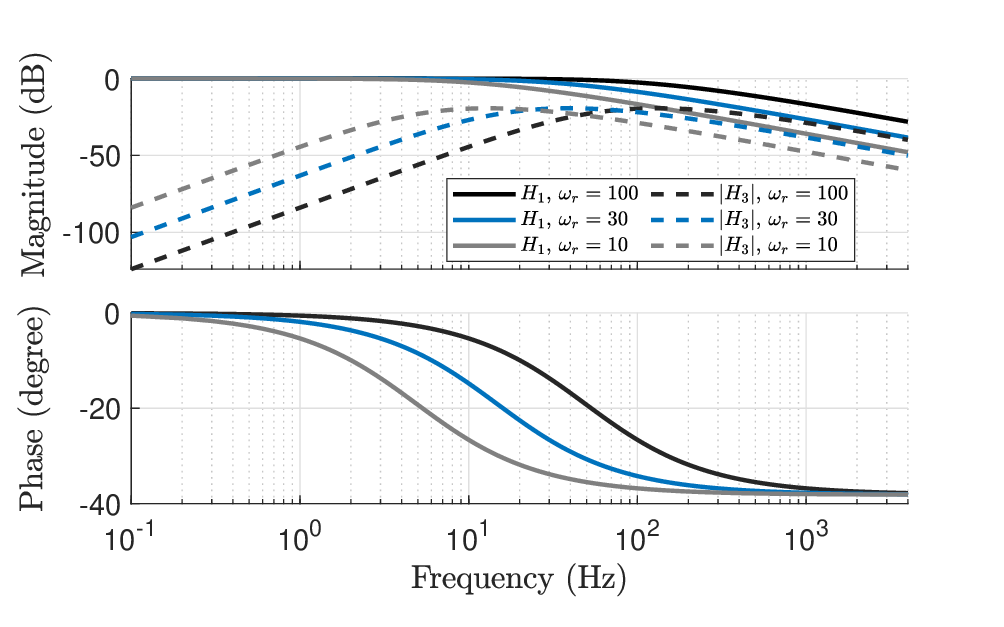}
    \caption{The first and third-order SIDF of the GFORE element with $\omega_r=10,30,100$ rad/sec.}
    \label{fig: GFORE wr 10 30 100}
\end{figure}

Thus, in this section, we propose a shaping filter structure that achieves a higher low-frequency attenuation slope of the HOSIDF magnitude as \( \omega \rightarrow 0 \), where this slope is defined as follows:
\begin{definition}
    \label{def: slop h0}
    The low-frequency slope (\textup{dB/dec}) of the magnitude response of the $H_n(\omega)$ is defined as
\begin{equation}
\label{eq: h0}
h_0 = \lim_{\omega \to 0} 
\frac{d\, 20\log_{10}|H_n(\omega)|}{d\,\log_{10}\omega},
\end{equation}
which quantifies the asymptotic rate of change of the magnitude at low frequencies.
\end{definition}

As depicted in Fig.~\ref{fig: GFORE wr 10 30 100}, the slope of all HOSIDFs of a GFORE element is \(40~\mathrm{dB/dec}\). Therefore, the only component that can alter this behavior is the shaping filter. In the following lemma, we explicitly express \( h_0 \) as a function of \( \Upsilon(\omega) \), which encapsulates the contribution of the shaping filter in the HOSIDF equations.  
\begin{lemma}
\label{lem: slop Upsilon}
Consider a first-order reset element in
\eqref{eq: reset state space}, with $A_r\in\mathbb{R}_{<0}$,
$B_r,C_r\in\mathbb{R}$, and scalar reset coefficient
$A_\rho\in(-1,1)$. Assume that
$C_rB_r(1-A_\rho)\neq 0$. For any fixed odd higher-order
harmonic $n\geq 3$, let $h_0$ denote the low-frequency slope
of $|H_n(\omega)|$. If
\begin{equation}
    \label{eq: var upsilon 0}
    \upsilon_0=\lim_{\omega \to 0^+}
    \frac{d\,\log_{10}|\Upsilon(\omega)|}
    {d\,\log_{10}\omega}
\end{equation}
exists, where $\Upsilon(\omega)$ is defined in
\eqref{eq: HOSIDFs all functions}, then
\begin{equation}
    \label{eq: Upsilon h0}
    h_0 = 20\left(\upsilon_0+1\right).
\end{equation}
\end{lemma}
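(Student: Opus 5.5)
Write $H_n(\omega)$ for odd $n\ge 3$ as a product of scalar factors and compute the low-frequency log-log slope of each factor separately; slopes of products add. With scalar $A_r=-\omega_r<0$, equation \eqref{eq: Hn} gives
\begin{equation*}
H_n(\omega)=\frac{C_rB_r}{A_r-jn\omega}\cdot\frac{-2j\omega}{\pi}\cdot\Omega(\omega)\,\Upsilon(\omega)\,\Lambda^{-1}(\omega).
\end{equation*}
Hence
\begin{equation*}
\log_{10}|H_n| = \log_{10}|C_rB_r| - \log_{10}|A_r-jn\omega| + \log_{10}\tfrac{2\omega}{\pi} + \log_{10}|\Omega| + \log_{10}|\Upsilon| - \log_{10}|\Lambda|.
\end{equation*}
I will differentiate each term with respect to $\log_{10}\omega$ and take $\omega\to0^+$. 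Multiplying by $20$ then gives $h_0$.

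The individual slopes are as follows.
\begin{itemize}
\item $|A_r-jn\omega|=\sqrt{\omega_r^2+n^2\omega^2}$. Its logarithmic derivative is $n^2\omega^2/(\omega_r^2+n^2\omega^2)\to0$.
\item $\Lambda(\omega)=\omega^2+\omega_r^2$. Its logarithmic derivative is $2\omega^2/(\omega^2+\omega_r^2)\to 0$.
\item The factor $\omega$ contributes exactly $1$.
\item $\Upsilon$ contributes $\upsilon_0$, by the hypothesis that this limit exists.
\end{itemize}
The remaining factor, $\Omega$, is where the assumption $A_\rho\in(-1,1)$ and the nondegeneracy condition $C_rB_r(1-A_\rho)\neq0$ enter.

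For $\Omega$, set $\epsilon(\omega)=e^{-\pi\omega_r/\omega}$. This is smooth in $1/\omega$ and tends to $0$ faster than any power of $\omega$. Then $\Delta=1+\epsilon$ and $\Delta_\rho=1+A_\rho\epsilon$, so
\begin{equation*}
\Omega=(1+\epsilon)\Big(1-\frac{A_\rho(1+\epsilon)}{1+A_\rho\epsilon}\Big)=\frac{(1+\epsilon)(1-A_\rho)}{1+A_\rho\epsilon}.
\end{equation*}
This tends to $1-A_\rho\neq0$, and $\Delta_\rho\neq 0$ because $|A_\rho\epsilon|<1$. Its logarithmic derivative with respect to $\log\omega$ is a combination of terms of the form $\omega\,\epsilon'(\omega)/(1+c\,\epsilon)$. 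Since $\omega\epsilon'(\omega)=(\pi\omega_r/\omega)\,e^{-\pi\omega_r/\omega}\to0$, the slope of $\Omega$ is $0$. The prefactor $C_rB_r\neq0$ ensures $\log|H_n|$ is finite, so the decomposition is legitimate. The condition $A_\rho\neq1$ is exactly what keeps $\Omega$ from vanishing identically; otherwise there would be no reset and no HOSIDF.

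Summing the slopes gives $\lim d\log_{10}|H_n|/d\log_{10}\omega=\upsilon_0+1$, hence $h_0=20(\upsilon_0+1)$.

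The main obstacle is justifying the term-by-term differentiation near $\omega=0$. Each factor must be nonzero and differentiable for small $\omega>0$, and $\Upsilon$ must be nonzero there, which is implicit in the existence of $\upsilon_0$. The essential-singularity behaviour of $e^{-\pi\omega_r/\omega}$ must also be handled, which the bound $\omega\epsilon'\to0$ settles. Without a shaping filter ($\varphi=0$) we have $\Upsilon=\omega$, so $\upsilon_0=1$ and the formula gives $h_0=40$ dB/dec, matching Fig.~\ref{fig: GFORE wr 10 30 100}.
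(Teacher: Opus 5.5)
Your proof is correct. It uses the same factorization of $H_n(\omega)$ as the paper, into $C_rB_r$, $(A_r-jn\omega)^{-1}$, $-2j\omega/\pi$, $\Omega$, $\Upsilon$ and $\Lambda^{-1}$, but you extract the slope differently, and more rigorously.

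The paper only takes limits of the factors themselves, obtaining $|H_n(\omega)|\sim K|\Upsilon(\omega)|\omega$. It then differentiates the surrogate $K|\Upsilon|\omega$ in place of $|H_n|$. This tacitly assumes that asymptotic equivalence carries over to logarithmic derivatives, which fails in general: $\omega+\omega^{2}\sin(\omega^{-2})\sim\omega$ as $\omega\to0^+$, yet its log-log slope is unbounded.

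You instead differentiate the exact identity for $\log_{10}|H_n|$ and show that every factor other than $\omega$ and $\Upsilon$ has a log-slope tending to zero. Your closed form $\Omega=(1+\epsilon)(1-A_\rho)/(1+A_\rho\epsilon)$, together with $\omega\epsilon'(\omega)\to0$, handles the essential singularity of $e^{-\pi\omega_r/\omega}$. The paper never addresses that point, since it only uses $\Omega\to1-A_\rho$. In effect you prove that the correction factor $|H_n|/(K|\Upsilon|\omega)$ has vanishing log-slope, which is exactly the step the paper glosses over.

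The paper's route is shorter and yields the explicit leading constant $K$, which fixes the asymptotic magnitude level and not only the slope. Yours is an actual proof of the slope statement under exactly the stated hypotheses, with the existence of $\upsilon_0$ (hence $\Upsilon\neq0$ and differentiable near $0^+$) used precisely where it is needed.
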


\noindent
\textbf{Proof}:
For odd higher-order harmonics $n\geq 3$, \eqref{eq: Hn}
gives
\begin{equation}
    H_n(\omega)
    =
    C_r(A_r-jn\omega)^{-1}B_r\Theta_\varphi(\omega).
\end{equation}
Since $A_r\in\mathbb{R}_{<0}$, it follows that
$e^{\pi A_r/\omega}\to 0$ as $\omega\to 0^+$. Therefore,
from \eqref{eq: HOSIDFs all functions}, we have
\begin{equation}
    (A_r-jn\omega)^{-1} \to \frac{1}{A_r}, \qquad
    \Lambda^{-1}(\omega) \to \frac{1}{A_r^2},
\end{equation}
and $\Omega(\omega) \to 1-A_\rho$. Consequently,
\begin{equation}
    H_n(\omega)
    \sim
    -\frac{2jC_rB_r(1-A_\rho)}{\pi A_r^3}
    \Upsilon(\omega)\omega,
    \qquad \omega\to 0^+ .
\end{equation}
Hence,
\begin{equation}
    |H_n(\omega)|
    \sim
    K|\Upsilon(\omega)|\omega,
    \qquad
    K=
    \left|
    \frac{2C_rB_r(1-A_\rho)}{\pi A_r^3}
    \right| .
    \label{eq: hn asymptotic}
\end{equation}
Since $K>0$ is constant, it does not affect the logarithmic
slope. Therefore,
\begin{align}
&\hspace{-1.5em} h_0
=
\lim_{\omega\to 0^+}
\frac{d\,20\log_{10}|H_n(\omega)|}
{d\,\log_{10}\omega} \nonumber \\
&\hspace{-1.5em}=
20\lim_{\omega\to 0^+}
\frac{d\,\log_{10}\left(K|\Upsilon(\omega)|\omega\right)}
{d\,\log_{10}\omega} \nonumber \\
&\hspace{-1.5em}=
20\left(
\lim_{\omega\to 0^+}
\frac{d\,\log_{10}|\Upsilon(\omega)|}
{d\,\log_{10}\omega}
+
\lim_{\omega\to 0^+}
\frac{d\,\log_{10}(K\omega)}
{d\,\log_{10}\omega}
\right).
\end{align}
Using
\begin{equation}
    \lim_{\omega\to 0^+}
    \frac{d\,\log_{10}(K\omega)}
    {d\,\log_{10}\omega}=1
\end{equation}
and the definition of $\upsilon_0$ in \eqref{eq: var upsilon 0},
we obtain
\begin{equation}
    h_0 = 20\left(\upsilon_0+1\right).
\end{equation}
This completes the proof. \qed

Regarding Lemma~\ref{lem: slop Upsilon}, the magnitude \( |\Upsilon(\omega)| \) determines the slope of \( |H_n(\omega)| \). In the typical case without a shaping filter, where \( C_s(j\omega) = 1 \) (or equivalently \( \varphi(\omega) = 0 \)), we obtain \( |\Upsilon(\omega)| = \omega \) which results in $\upsilon_0=1$, and consequently \( h_0 = 40~\mathrm{dB/dec} \). In the following, we introduce a theorem that provides a shaping filter design and parameter selection guideline to modify \( |\Upsilon(\omega)| \) such that \( h_0 \) increases, thereby reducing the influence of the HOSIDFs.

\begin{theorem}
\label{theorem: shaping filter}
Consider a first-order reset element defined by
\[
A_r = -\omega_r,\quad 
B_r\in \mathbb{R},\quad 
C_r \in \mathbb{R},\quad 
D_r \in \mathbb{R},
\]
with $\omega_r \in \mathbb{R}_{>0}$, the reset coefficient satisfy
\(A_\rho\in(-1,1)\), and assume \(C_rB_r(1-A_\rho)\neq0\). Let the shaping filter be a bi-proper Hurwitz transfer function of order $p$ of the form
\begin{equation}
    \label{eq: Cs}
    C_s(s) = 
    \frac{\sum_{k=0}^{p} a_k\,\omega_r^{-k} s^{k}}
         {\sum_{k=0}^{p} b_k\,\omega_r^{-k} s^{k}},
\end{equation}
where $p \in \mathbb{N}$, $a_k,b_k\in\mathbb{R}$,
$a_0b_0\neq 0$, and $a_pb_p\neq 0$. Assume that no pole-zero
cancellation reduces the effective order of $C_s(s)$. If, for each
$m = 1,\ldots,p$, the coefficients satisfy
\begin{equation}
    \label{eq: Eq_m Th}
    \mathrm{Eq}_m: \quad
    \mathop{\sum_{l=0}^{p} \sum_{k=0}^{p}}_{k+l=2m-2}
    a_k b_l (-1)^{l}
    +
    \mathop{\sum_{l=0}^{p} \sum_{k=0}^{p}}_{k+l=2m-1}
    a_k b_l (-1)^{l}
    = 0,
\end{equation}
then the low-frequency slope of the HOSIDFs satisfies
\begin{equation}
    \upsilon_0 = 2p+1,
\end{equation}
and therefore
\begin{equation}
    h_0 = 20(\upsilon_0+1)=40(p+1)\ \mathrm{dB/dec}.
\end{equation}
\end{theorem}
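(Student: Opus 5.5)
The plan is to make $\Upsilon(\omega)$ completely explicit for the scalar case $A_r=-\omega_r$. Then I will show that $|\Upsilon(\omega)|$ equals an exact monomial $\omega^{2p+1}$ times a smooth function that does not vanish at $\omega=0$. The statement then follows from Lemma~\ref{lem: slop Upsilon}.

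\textbf{Step 1: rewrite $\Upsilon$ as a real part.} With $A_r=-\omega_r$ we have $\Upsilon(\omega)=e^{j\varphi}(\omega\cos\varphi+\omega_r\sin\varphi)$, so $|\Upsilon(\omega)|=|\omega\cos\varphi+\omega_r\sin\varphi|$. Writing $C_s(j\omega)=|C_s(j\omega)|e^{j\varphi}$ gives
\[
\omega\cos\varphi+\omega_r\sin\varphi=\RePart{(\omega-j\omega_r)e^{j\varphi}}=\frac{\RePart{(\omega-j\omega_r)C_s(j\omega)}}{|C_s(j\omega)|}.
\]
This form uses only $\cos\varphi$, $\sin\varphi$ and $e^{j\varphi}$, so the branch chosen for $\varphi$ is irrelevant. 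Let $N$ and $D$ be the numerator and denominator in \eqref{eq: Cs}, and set $x=\omega/\omega_r$. Multiplying numerator and denominator by $\overline{D(j\omega)}$ yields
\[
|\Upsilon(\omega)|=\frac{\omega_r\,\bigl|\RePart{(x-j)N(jx\omega_r)\overline{D(jx\omega_r)}}\bigr|}{|N(j\omega)|\,|D(j\omega)|}.
\]
Since $C_s$ is Hurwitz, $D(j\omega)\neq0$ on the imaginary axis. As $\omega\to0$ the denominator tends to $|a_0b_0|\neq0$ and is smooth near $0$.

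\textbf{Step 2: expand the polynomial.} We have $N\overline{D}=\sum_{k,l}a_kb_l j^k(-j)^l x^{k+l}=\sum_q c_q j^q x^q$, where $c_q=\sum_{k+l=q}a_kb_l(-1)^l$. Then
\[
\RePart{(x-j)\textstyle\sum_q c_q j^q x^q}=\sum_q c_q\RePart{j^q}x^{q+1}-\sum_q c_q\RePart{j^{q+1}}x^q.
\]
The first sum contributes only for even $q$ and the second only for odd $q$, so every term has odd degree. Collecting the coefficient of $x^{2m-1}$ (from $q=2m-2$ and $q=2m-1$) gives
\[
(-1)^{m-1}c_{2m-2}-(-1)^m c_{2m-1}=(-1)^{m-1}\bigl(c_{2m-2}+c_{2m-1}\bigr),
\]
which is exactly $(-1)^{m-1}$ times the left side of $\mathrm{Eq}_m$. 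Hence the real part equals $\sum_{m=1}^{p+1}(-1)^{m-1}E_m x^{2m-1}$, where $E_m$ denotes that left side.

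\textbf{Step 3: identify the leading term.} Under the hypotheses $\mathrm{Eq}_1,\dots,\mathrm{Eq}_p$ all $E_m$ vanish for $m\le p$. The only surviving term is $m=p+1$, for which $c_{2p+1}=0$ (since $k,l\le p$) and $c_{2p}=(-1)^pa_pb_p\neq0$. So
\[
|\Upsilon(\omega)|=|a_pb_p|\,\omega_r^{-2p}\,\frac{\omega^{2p+1}}{|N(j\omega)||D(j\omega)|}.
\]
This is an exact identity, not only an asymptotic one. The factor $g(\omega)=|N(j\omega)D(j\omega)|$ is smooth and positive near $0$, with even Taylor expansion, so $\omega g'(\omega)/g(\omega)\to0$. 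Therefore
\[
\upsilon_0=\lim_{\omega\to0^+}\frac{d\log|\Upsilon|}{d\log\omega}=2p+1,
\]
and in particular the limit exists. Lemma~\ref{lem: slop Upsilon}, whose hypotheses on $A_r$, $A_\rho$ and $C_rB_r(1-A_\rho)$ match those of the theorem, then gives $h_0=20(2p+2)=40(p+1)$ dB/dec.

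\textbf{Main obstacle.} The main difficulty is the bookkeeping in Step 2: matching signs and indices so that the coefficient of $x^{2m-1}$ reproduces $\mathrm{Eq}_m$ exactly. The rest is routine. The no-pole-zero-cancellation and $a_pb_p\neq0$ assumptions are needed precisely to keep the top coefficient $c_{2p}$ nonzero.
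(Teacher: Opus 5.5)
Your proof is correct and follows the paper's Appendix proof: $\RePart{(\omega-j\omega_r)N(j\omega)\overline{D(j\omega)}}$ is exactly the paper's polynomial $\Pi(\omega)$, and your matching of the $\omega^{2m-1}$ coefficients against $\mathrm{Eq}_m$ is its $Q_m$ step. Your exact identity $|\Upsilon(\omega)|=|a_pb_p|\,\omega_r^{-2p}\omega^{2p+1}/|N(j\omega)D(j\omega)|$, together with $\omega g'/g\to0$, justifies the slope limit more rigorously than the paper, which substitutes $\lim_{\omega\to0}\Upsilon$ into the derivative, and it avoids the $\operatorname{atan2}$ sign and branch issues; one correction to your closing remark is that $c_{2p}=(-1)^pa_pb_p$ is nonzero by $a_pb_p\neq0$ alone, so the no-pole-zero-cancellation assumption is never used in the argument.
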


\textbf{Proof:} See Appendix~\ref{App: pf of theorem}.

Theorem \ref{theorem: shaping filter} provides a general structure for the shaping-filter design together with its parameter-tuning conditions. As shown, it yields $p$ equations for $2(p+1)$ unknown parameters in the shaping filter. Consequently, the solution for achieving $\upsilon_0 = 2p+1$ is not unique, offering flexibility in designing an appropriate shaping filter for various applications.

In the following, we present an example illustrating how a shaping filter can be constructed using Theorem \ref{theorem: shaping filter}. Considering a second-order shaping filter ($p=2$), from \eqref{eq: Cs} we obtain
\begin{equation}
\label{eq: second-order}
C_s(s) = \frac{a_0 + \frac{a_1}{\omega_r} s + \frac{a_2}{\omega_r^2} s^2}{b_0 + \frac{b_1}{\omega_r} s + \frac{b_2}{\omega_r^2} s^2}.
\end{equation}
Applying Theorem \ref{theorem: shaping filter}, the resulting conditions become
\begin{align}
\label{eq: Eq2 1}
\mathrm{Eq}_1: &\quad a_0 b_0 + a_1 b_0 - a_0 b_1 = 0, \nonumber\\
\mathrm{Eq}_2: &\quad -a_1 b_1 + a_0 b_2 + a_2 b_0 + a_1 b_2 - a_2 b_1 = 0.
\end{align}

Without loss of generality, we adopt the standard transfer-function normalization $a_0 = 1$ and $b_0 = 1$. The parameters $b_1$ and $b_2$ can then be selected according to the desired shaping-filter characteristics (e.g., cutoff frequency, presence of resonance) while ensuring that the poles remain in the left-half plane. The coefficients $a_1$ and $a_2$ follow directly from solving $\mathrm{Eq}_1$ and $\mathrm{Eq}_2$. The resulting shaping filter guarantees a HOSIDF slope of $h_0 = 20(2p + 1 + 1) = 120 \,\text{dB/dec}$. The same procedure can be applied to shaping filters of any order. To facilitate practical implementation, Appendix~\ref{app:practical_workflow} summarizes the design workflow for selecting the shaping-filter order, computing the coefficients, and verifying the resulting reset behavior.

In the next section, we show how the proposed shaping filters modify the HOSIDFs of a GFORE element and demonstrate their time-domain effectiveness.

%% file: Sections/04Simulation.tex
\section{HOSIDF Reduction Validation and Superposition Law Assessment}
\label{Sec: superposition law}
In this section, we first consider three shaping filters and illustrate their effect on the third-order harmonic of a GFORE element. We then investigate the superposition law for the GFORE element both with and without the inclusion of the designed shaping filters.

Let $C_{s_1}(s)$, $C_{s_2}(s)$, and $C_{s_3}(s)$ denote the first-, second-, and third-order shaping filters designed using Theorem~\ref{theorem: shaping filter}, respectively. Using \eqref{eq: Hn}, the quantities $H_1(\omega)$ and $|H_3(\omega)|$ for all four cases are plotted in Fig.~\ref{fig: GFORE HOSIDFs SF}. The GFORE parameters are identical across all cases; only the shaping filter varies (see Table \ref{tab: SF parameters}). For a fair comparison, all designs are constructed to exhibit the same phase at a nominal bandwidth frequency ($100$~Hz), and they also show nearly identical first-order harmonic magnitudes at $H_1(\omega)$.

\begin{figure}
    \centering
    \includegraphics[width=1\columnwidth]{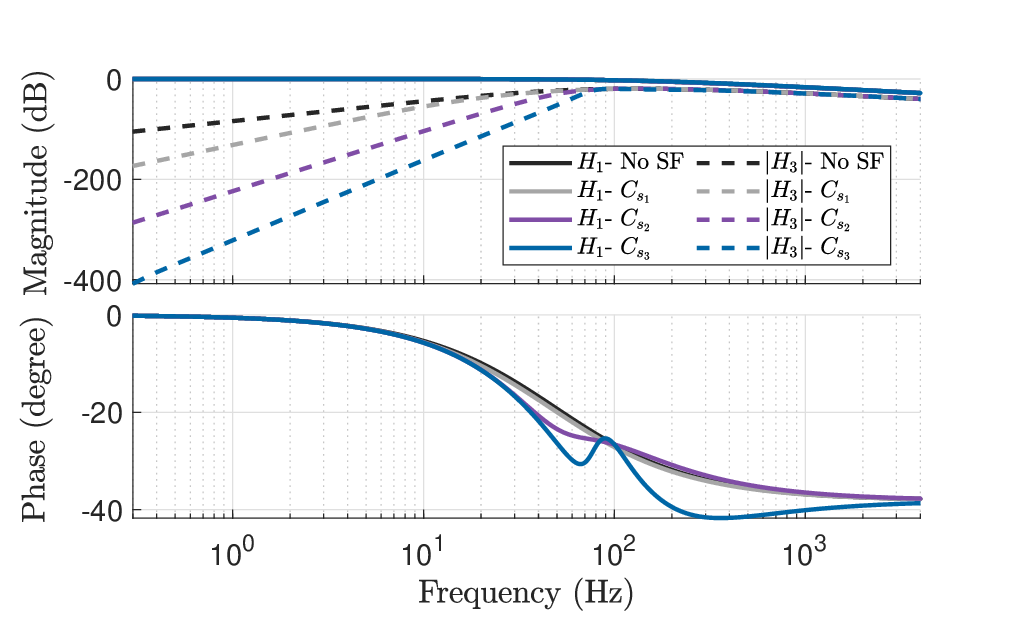}
    \caption{The first and third-order SIDF of the GFORE elements in cases: No shaping filter ($p=0,\,C_s=1$), first-order ($p=1$), second-order ($p=2$), and third-order ($p=3$) shaping filters}
    \label{fig: GFORE HOSIDFs SF}
\end{figure}

\begin{table}
\caption{Parameters of the designed shaping filters for the GFORE element with $\omega_r = 2\pi \times 100$ rad/s and $A_\rho = 0$.
}
\label{tab: SF parameters}
\resizebox{1\columnwidth}{!}{%
\begin{tabular}{ccccccccc}
\hline
          & \large{$a_0$} &\large{$a_1$} & \large{$a_2$} & \large{$a_3$}& \large{$b_0$}& \large{$b_1$} &\large{$b_2$}           &\large{$b_3$}    \\ \hline
\large{No SF}       & \large{$1$}        &\large{$0$}& \large{$0$}                &\large{$0$}&\large{$1$}& $0$ &$0$         & \large$0$ \\ 
\large{$C_{s_1}$}       & \large{$1$}        &\large{$6$}& \large{$0$}                &\large{$0$}&\large{$1$}&\large{$7$} &\large{$0$}         & \large{$0$} \\ 
\large{$C_{s_2}$}       & \large{$1$}        &\large{$1.9046$}& \large{$2.7951$}                &\large{$0$}&\large{$1$}&\large{$2.9046$} &\large{$3.7374$}         & \large{$0$} \\
\large{$C_{s_3}$}       & \large{$1$}        &\large{$1$}& \large{$1.9021$}                &\large{$0.7906$}&\large{$1$}&\large{$2$} &\large{$2.3940$}         & \large{$1.6763$} \\
\hline
\end{tabular}
}
\end{table}

Examining $H_3(\omega)$ confirms the result of Theorem~\ref{theorem: shaping filter}: the HOSIDF slope scales with the shaping-filter order as $h_0 = 40(p+1)$~dB/dec. Since $p=0$ for the case without shaping filter, we obtain $h_0 = 40$~dB/dec. As depicted in Fig.~\ref{fig: GFORE HOSIDFs SF}, the higher-order filters yield slopes of $h_0 = 80$~dB/dec, $120$~dB/dec, and $160$~dB/dec for $p=1$, $p=2$, and $p=3$, respectively.

This observation is significant because the reset element is intended to operate predominantly around the bandwidth frequency, where its nonlinearity provides the desired phase advantage. At frequencies outside this region, we aim to suppress nonlinear effects to avoid introducing undesired harmonic distortions. The proposed shaping-filter design effectively reduces the HOSIDFs over the frequency range in which activation of the reset element is undesired, i.e., where the controller is intended to behave approximately linearly.

In this regard, to also validate the reduction of nonlinearity in the time domain, we examine the superposition property for the GFORE element, with and without the proposed shaping filters, through an illustrative example. With reference to Fig.~\ref{fig: reset block diagram}, we apply a multi-sine input $u_1(t)$ as
\begin{equation}
    u_{1}(t)=\sum_{i=1}^{6} u_{1_i}(t), \qquad 
    u_{1_i}(t)=\sin\bigl(2\pi f_i t + \phi_i\bigr),
\end{equation}
where each phase $\phi_i$ is independently drawn from a uniform distribution
$\phi_i \sim \mathcal{U}[0,2\pi)$. 
The corresponding output of the reset element is denoted by $u_r(t)$.
In addition, we excite the reset element individually with each $u_{1_i}(t)$, yielding outputs
$u_{r_i}(t)$. We then construct
\begin{equation}
    \hat{u}_r(t)=\sum_{i=1}^{6} u_{r_i}(t).
\end{equation}

For any LTI system, it necessarily holds that $u_r(t)=\hat{u}_r(t)$. However, due to the inherent nonlinearity of reset elements, this equality does not generally hold. Our interest, therefore, is in assessing how close $u_r(t)$ and $\hat{u}_r(t)$ become when the proposed shaping filters are introduced. To quantify this, we evaluate the normalized maximum superposition error
\begin{equation}
\label{eq: superposition law eq}
e_\mathrm{spl} = 
\frac{\max \left| u_r(t) - \hat{u}_r(t) \right|}
{\max \left| u_r(t) \right| }
\times 100.
\end{equation}

Using the frequency set $f_i = [1,\,5,\,10,\,15,\,20,\,50]~\mathrm{Hz}$, we compute $u_r(t)$ and $\hat{u}_r(t)$ for all configurations shown in Fig.~\ref{fig: GFORE HOSIDFs SF}. The time-domain differences $u_r(t) - \hat{u}_r(t)$ are presented in Fig.~\ref{fig: ur hat}. These results indicate that, as the order of the shaping filter increases, the discrepancy between $u_r(t)$ and $\hat{u}_r(t)$ decreases. This observation demonstrates the effectiveness of the proposed method in attenuating reset actions at low frequencies. Furthermore, Fig.~\ref{fig: e spl} reports the corresponding values of $e_\mathrm{spl}$, providing a quantitative measure of the improvement achieved by the shaping filters and enabling direct comparison across filter orders. In the next section, we demonstrate how the new shaping-filter design improves the performance of the closed-loop control system of an industrial motion stage.
\begin{figure}
    \centering
    \includegraphics[width=1\columnwidth]{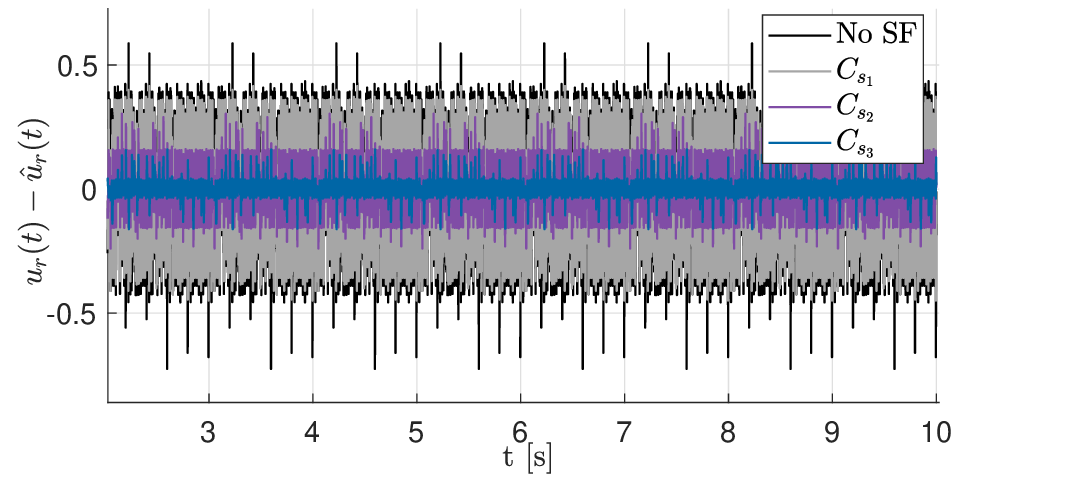}
    \caption{The value of $u_r(t) - \hat{u}_r(t)$ under different cases.}
    \label{fig: ur hat}
\end{figure}

\begin{figure}
    \centering
    \includegraphics[width=1\columnwidth]{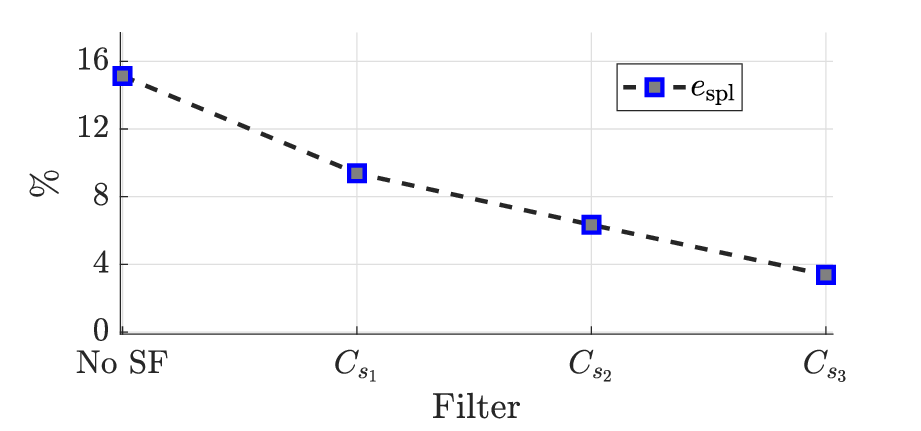}
    \caption{The percentage of $e_{\mathrm{spl}}$ for four cases.}
    \label{fig: e spl}
\end{figure}

%% file: Sections/05Experiments.tex
\section{Design and Experimental Results}\label{Sec: Experimental Example}
This section first introduces the case study and then presents the controller design and experimental results.
\subsection{Case Study}\label{subsec: case study}
A wire bonder is a key component in semiconductor manufacturing, used to form electrical interconnections between an integrated circuit and the terminals of its package. The ASMPT wire bonder considered in this work is shown in Fig. \ref{fig: Wire-bonder}a. Its isolated motion stage, depicted in Fig. \ref{fig: Wire-bonder}b, provides three degrees of freedom (DoF) along the X-, Y-, and Z-axes, with each stage actuating motion in the corresponding direction.

The analysis in this paper is restricted to the X-stage. The frequency response functions (FRFs) from the actuator forces $F_\mathrm{x}$, $F_\mathrm{y}$, and $F_\mathrm{z}$ to the X-direction displacement $D_\mathrm{x}$ are presented in Fig. \ref{Fig: FRF AB383}. The results show that the X–Y and X–Z cross-couplings are negligible, as their FRF magnitudes remain approximately 40 dB, corresponding to two orders of magnitude, below that of the direct transfer from $F_\mathrm{x}$ to $D_\mathrm{x}$ (around the targeted cross-over frequency). This indicates that each degree of freedom is actuated independently, such that the dynamics along the X-, Y-, and Z-axes can be accurately approximated as decoupled single-input–single-output (SISO) systems. For confidentiality, the frequency axis in Fig.~\ref{Fig: FRF AB383},
as well as in all subsequent experimental and frequency-domain results,
is reported in normalized form. Let $f_{\mathrm{phys}}$ denote the
physical frequency and let $\lambda_{\mathrm{scale}}>0$ be the confidential
scaling factor. The normalized frequency is defined as $\bar{f}=\frac{f_{\mathrm{phys}}}{\lambda_{\mathrm{scale}}}$. Equivalently, for angular frequencies, $\bar{\omega}
    =
    \frac{\omega_{\mathrm{phys}}}{\lambda_{\mathrm{scale}}}$. Unless otherwise stated, all frequency values and frequency axes reported in the remainder of this study correspond to these normalized quantities.

An optimal linear controller is synthesized for the system under consideration to improve low-frequency performance while enforcing the robustness constraint that the maximum sensitivity peak does not exceed 6 dB. The resulting automatically tuned controller, denoted by $C_\text{L}(\omega)$, is composed primarily of a PID compensator together with notch and inverse-notch filters at higher frequencies. For confidentiality reasons, the detailed controller structure and parameter values are not reported. This controller is optimized to maximize the corner frequency of the integral action, represented by a PI term of the form $\mathrm{PI}=1+\omega_i/s$, for which the achieved value is $\omega_i=2\pi\times 3.87\times10^{-3}\,$rad/sec. To further improve performance, the corner frequency must be increased without compromising robustness. To this end, reset controllers are used in the subsequent analysis.

\begin{figure}
\centering
\includegraphics[width=0.9\columnwidth]{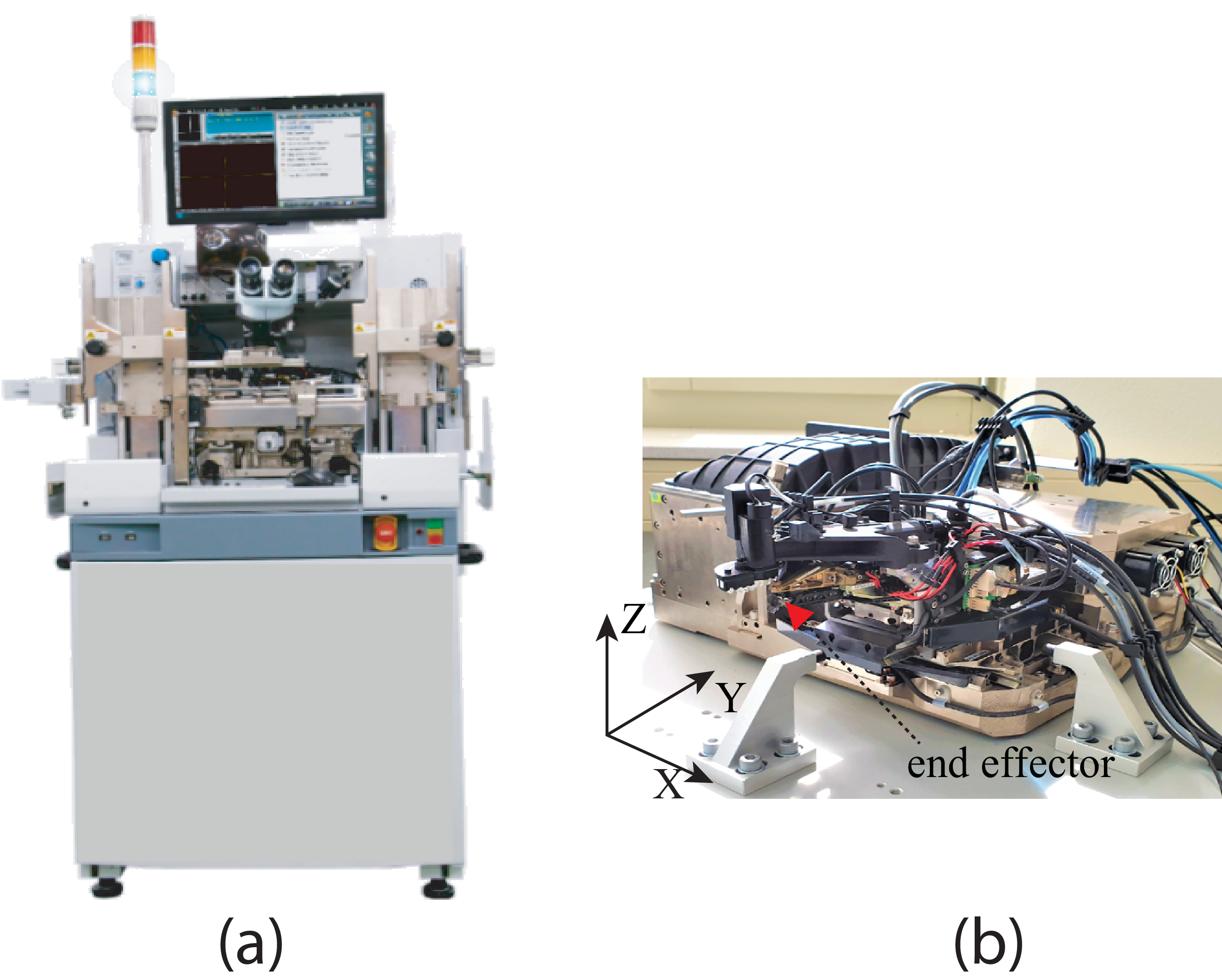}
\caption{(a) An Industrial wire bonder. (b) Isolated XYZ-motion platform of the wire bonder.}
\label{fig: Wire-bonder}
\end{figure}

\begin{figure}
\centering
\includegraphics[width=1\columnwidth]{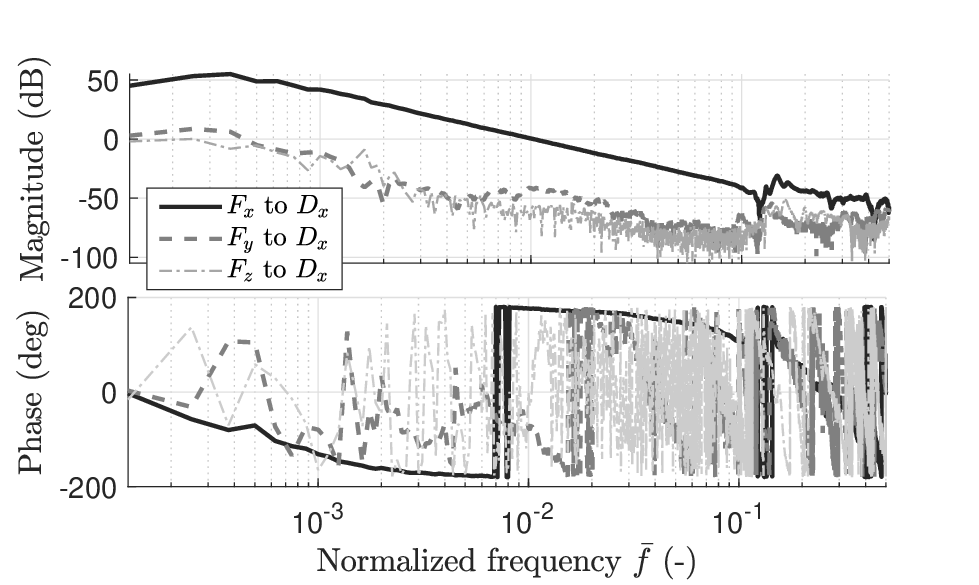}
\caption{FRF of the X-stage of the physical wire bonder, illustrating the mapping of actuator forces in the X-stage ($F_x$), Y-stage ($F_y$), and Z-stage ($F_z$) to the displacement measured by the X-stage encoder ($D_x$).}
\label{Fig: FRF AB383}
\end{figure}

\subsection{Reset control design}
In this section, CgLp element is used to increase the corner frequency of the PI controller. More specifically, a reset-based CgLp filter is used to compensate for the phase loss at the bandwidth resulting from the additional phase lag associated with an increase in $\omega_i$. The CgLp filter provides broadband phase lead in the frequency region of interest, particularly around the bandwidth, while maintaining approximately unity gain (0 dB). Since the detailed structure of the CgLp element is beyond the scope of this paper, the reader is referred to \citep[Definition 1]{hosseini2025AddOnFilterDesign} and \citep[Section V]{hosseini2025AddOnFilterDesign} for a comprehensive description.

\usetikzlibrary {arrows.meta}
\tikzstyle{block} = [draw,thick, fill=white, rectangle, minimum height=2em, minimum width=2.5em, anchor=center]
\tikzstyle{sum} = [draw, fill=white, circle, minimum height=0.6em, minimum width=0.6em, anchor=center, inner sep=0pt]
\usetikzlibrary {arrows.meta}
\tikzstyle{block} = [draw,thick, fill=white, rectangle, minimum height=1.75em, minimum width=2.1875em, anchor=center]
\tikzstyle{block2} = [draw,thick, fill=white, rectangle, minimum height=0.8em, minimum width=1.0em, anchor=center]
\tikzstyle{sum} = [draw, fill=white, circle, minimum height=0.6em, minimum width=0.6em, anchor=center, inner sep=0pt]
\begin{figure}
	\centering
	\begin{scaletikzpicturetowidth}{\columnwidth}
		\begin{tikzpicture}[scale=\tikzscale]
			\node[coordinate](input) at (0,0) {};
			\node[sum] (sum1) at (1,0) {};
			\node[sum] (sum3) at (8.75,0) {+};
			\node[sum] (sum4) at (11,-2) {+};
			\node[sum, fill=black, minimum size=0.4em] (dot2) at (11,0) {};	
            \node[sum, fill=black, minimum size=0.0em] (dotCs) at (3.3,1.0) {};
            \node[sum, fill=black, minimum size=0.0em] (dotImag) at (4.4,1.7) {};
            \node[sum, fill=black, minimum size=0.0em] (dotImag2) at (4.88,1.7) {};
			\node[sum] (sumN) at (6.17,0) {+};
			\node[block] (lead) at (5.4,2.0) {$\mathcal{R}$};
            \node[block2] (Cs) at (3.9,1.0) {$C_s$};
\node[block] (C_par) at (4.75,0) {$C_\text{par}$};
            
			\node[block] (controller) at (7.3,0) {$C_\text{pos}$};
			\node[block] (fo-higs) at (2.25,0) {$C_\text{pre}$};
			\node[block] (system) at (10,0) {$G$};

            \node[sum, fill=black, minimum size=0.4em] (Dot_1) at (3.3,0) {};
			\node[coordinate](output) at (12,0) {};
			\node[coordinate](di-input) at (8.75,1) {};
			\node[coordinate](n-input) at (12,-2) {};
			\draw[arrows = {-Latex[width=6pt, length=6pt]}] (input)  -- node[above]{$r$} (sum1);
			\draw[arrows = {-Latex[width=6pt, length=6pt]}] (di-input)node[above]{$d_i$}  --  (sum3);
			\draw[arrows = {-Latex[width=6pt, length=6pt]}] (n-input)  -- node[above]{$d_n$} (sum4);
			\draw[arrows = {-Latex[width=6pt, length=6pt]}] (sum1)   --node[above]{$e$}  (fo-higs);
			\draw[ = {-Latex[width=6pt, length=6pt]}] (fo-higs) --node [above]{$\,u_1$} (Dot_1);
            \draw[ = {-Latex[width=6pt, length=6pt]}, dash pattern=on 1.1pt off 0.9pt] (dotCs) --node [above]{} (Cs);
            \draw[  = {-Latex[width=6pt, length=6pt]}, dash pattern=on 1.1pt off 0.9pt] (Cs) -|node [above]{} (dotImag);
            \draw[arrows={-Latex[width=4pt, length=4pt]}, dash pattern=on 1.1pt off 0.9pt] 
  (dotImag) -- node [below]{$e_r$}(dotImag2);
			\draw[arrows = {-Latex[width=6pt, length=6pt]}] (lead)  -| (sumN) node[above, xshift=+6pt,yshift=+10pt] {$u_r$};

            \draw[arrows = {-Latex[width=6pt, length=6pt]}] (C_par)  --node[above] {} (sumN);
            
			\draw[arrows = {-Latex[width=6pt, length=6pt]}] (controller)  --node[above] {$u$} (sum3);
			\draw[arrows = {-Latex[width=6pt, length=6pt]}] (sum3) -- (system);
			\draw[arrows = {-Latex[width=6pt, length=6pt]}] (system)  -- node[above]{$y$} (output);
			\draw[arrows = {-Latex[width=6pt, length=6pt]}] (dot2) -- (sum4);

\draw[arrows = {-Latex[width=6pt, length=6pt]}] (sumN) -- (controller);
            
			\draw[arrows = {-Latex[width=6pt, length=6pt]}] (sum4) -| node[pos=0.9,left]{$-$} (sum1);

            \draw[arrows = {-Latex[width=6pt, length=6pt]}] (Dot_1) |-node[pos=0.85,left]{} (lead);

            \draw[arrows = {-Latex[width=6pt, length=6pt]}] (Dot_1) |-node[pos=0.85,left]{} (C_par);
			
		\end{tikzpicture}
	\end{scaletikzpicturetowidth}
	\caption{Block diagram of the closed-loop reset control system.}
	\label{Fig: Block diagram CL}	
\end{figure}

To study and analyze the frequency response of a closed-loop reset control system, we use the results from \cite{LukeLure} and \cite{hosseini2025reliability}. Having the closed-loop system in Fig. \ref{Fig: Block diagram CL} subject to input $r(t)=r_{o}\sin(\omega t)$ (with $r_o=1$ since the reset element does not exhibit amplitude-dependent nonlinearities), we present the pseudo-sensitivity for the error signal, as \citep[Section~II.D]{LukeLure}:
\begin{equation}
    \label{eq: S infty}
    \left|S_{\infty}(\omega)\right|
    = \frac{\max\limits_{0 \le t < 2\pi/\omega} |e(t,\omega)|}{r_o},
\end{equation}  
where
\begin{equation}
    \label{eq ess}
        e_\text{}(t,\omega)=\sum_{n=1}^{\infty}|S_{r,e}^{n}(\omega)|\sin{(n\omega t+\angle{S_{r,e}^{n}(\omega)})}.
\end{equation}
$S_{r,e}^{n}(\omega)$ are the higher-order sensitivity functions  ($\text{n}^\text{th}$-order HOSIDF, from $r$ to $e$) as \citep[equations (32)-(34)]{LukeLure}:
\begin{equation}
\label{eq Sn}
S_{r,e}^{n}(\omega) =
\begin{cases}
\displaystyle \frac{1}{1 + \mathcal{L}_1(\omega)}, \hfill \text{for } n = 1, \\[10pt]
\displaystyle -\mathcal{L}_n(\omega) S_\mathrm{bl}(nj\omega) 
\left(|S_{r,e}^{1}(\omega)| e^{jn\angle S_{r,e}^{1}(\omega)}\right), \\
\hfill \text{for odd } n \geq 2, \\[5pt]
0, \hfill \text{for even } n \geq 2,
\end{cases}
\end{equation}
 where $S_\mathrm{bl}(nj\omega)=\frac{1}{1+L_\mathrm{bl}(nj\omega)}$, and
 \[L_\mathrm{bl}(j\omega)=G(j\omega)C_\text{pos}(j\omega)[C_\text{par}(j\omega)+R_\text{bl}(j\omega)]C_\text{pre}(j\omega)\]
 is the base linear transfer function of the open-loop. \( \mathcal{L}_n(j\omega) \) $\forall n\neq1$ is given by (see \citep[equation (34)]{LukeLure}):  
\begin{equation}
    \label{eq: Ln open loop}
\mathcal{L}_{n}(\omega)=G(nj\omega)C_\text{pos}(nj\omega)H_{\varphi n}(\omega)C_\text{pre}(j\omega)e^{j(n-1)\angle C_\text{pre}(j\omega)},
\end{equation}
where for $n=1$ we have \cite[equation (30)]{LukeLure}

\begin{equation}
        \label{eq: L1 open loop}
    \mathcal{L}_{1} (\omega)=G(j\omega)C_\mathrm{pos}(j\omega)[H_{\varphi 1}(\omega)+C_\mathrm{par}(j\omega)]C_\mathrm{pre}(j\omega).
\end{equation}

\begin{remark}
\label{rem: multi reset}
Please note that the existence of the closed-loop frequency response in reset control systems relies on the assumption that the reset instants $t_k$ occur $\pi/\omega$ apart and are determined by the first-order harmonic of the reset signal $e_r(t)_{n=1}$ (see \citep[Assumption~2]{saikumar2021loop}).
\end{remark}

Having reset instants spaced by $\pi/\omega$ implies that if the reset signal $e_r(t)$ exhibits multiple zero-crossings within one cycle, there will inevitably be more than two reset instants within one $2\pi/\omega$ cycle. In such a case, the closed-loop frequency response of the reset control system is no longer valid. Let $N_r \in \mathbb{N}$ denote the number of resets occurring within one period of the input, defined as the number of distinct solutions of the reset signal equation $e_r(t,\omega)=0$ over the interval $t \in [0, 2\pi/\omega)$. Formally,
\begin{equation} \label{eq: nr}
N_r(\omega) \;:=\; \#\Bigl\{\, t_k\in\bigl[0,\tfrac{2\pi}{\omega}\bigr) : e_r(t_k,\omega)=0 \,\Bigr\}, \,\,\,\,\, N_r \geq 2.
\end{equation}
The reset signal $e_r(t)$ is calculated as \cite[Section 2.3]{hosseini2025reliability}
\begin{equation}
    \label{eq:er}
        e_{r}(t,\omega)=
        \sum_{n=1}^{\infty}\left|S_{r,e_r}^{n}(\omega)\right|\sin{\left(n\omega t+\angle{S_{r,e_r}^{n}(\omega)}\right)},
\end{equation}
 where
 \begin{equation}
\label{eq: S1-u1}
    S_{r,e_r}^{1}(\omega) = \frac{C_{\text{pre}}(j\omega)C_s(j\omega)}{1 + \mathcal{L}_1(\omega)},
\end{equation}
and
\begin{align}
& S_{r,e_r}^{n}(\omega)
\notag\\
& =
\frac{
-G(nj\omega)C_{\text{pos}}(nj\omega)H_{\varphi n}(\omega)
C_{\text{pre}}(nj\omega)C_s(nj\omega)
}
{1 + L_{\text{bl}}(nj\omega)}
\notag\\
& {}\cdot
S_{r,e_r}^{1}(\omega)
e^{j(n-1)\angle S_{r,e_r}^{1}(\omega)},
\quad \forall n \neq 1 .
\label{eq:Sn-u1}
\end{align}

Using the closed-loop frequency-domain representations developed for reset control systems with shaping filters, the following sections first design a reset controller to improve upon the linear controller \(C_\mathrm{L}\). Subsequently, the performance is further enhanced by incorporating a shaping filter into the reset controller design.

\subsubsection{Reset control without shaping filter}

We first consider the reset controller \(C_{\mathcal{R}_1}\), which does not include a shaping filter. This controller is designed to maximize the achievable integral corner frequency \(\omega_i\). To compensate for the associated phase loss, a CgLp filter is incorporated into the control design. Starting from the nominal design, \(\omega_i\) was increased incrementally, and the corresponding phase compensation provided by the CgLp element was adjusted accordingly. This procedure was continued until the induced nonlinearity became excessive, leading to multiple zero-crossings \((N_r > 2)\). Following this design approach, the maximum achievable value was found to be
\(\omega_i = 2\pi \times 7 \times 10^{-3}\ \text{rad/s}.\)
The corresponding CgLp parameters are listed in Table \ref{tab: controller parameters}. The sensitivity functions obtained with \(C_{\mathcal{R}_1}\) and the linear benchmark controller \(C_\text{L}\) are shown in Fig. \ref{fig: Sensitivity all}. It is observed that \(C_{\mathcal{R}_1}\) yields improved low-frequency sensitivity attenuation while maintaining the peak sensitivity below the imposed robustness bound of 6 dB.  

\begin{table}[]\centering
\caption{Parameters of the \(C_{\mathcal{R}_1}\) and \(C_{\mathcal{R}_2}\) controllers. Details of the structure and parameter selection, particularly for the CgLp element, are provided in \citep[Definition 1]{hosseini2025AddOnFilterDesign}. Frequencies are given in rad/s.}
\label{tab: controller parameters}
\resizebox{0.75\columnwidth}{!}{%
\begin{tabular}{cccccc}
\hline
          & {$\omega_r$} &{$\omega_l$} & {$\omega_f$} & {$A_\rho$}& {$\omega_i$}    \\ \hline
{$C_{\mathcal{R}_1}$}       & {$0.0703$}        &{$0.1138$}& {$0.1963$}                &{$0$}& $0.0439 $ \\ 
{$C_{\mathcal{R}_2}$}       & {$0.0485$}        &{$0.0785$}& {$0.2301$}                &{$0$}& {$0.0746$} \\ \hline
\end{tabular}
}
\end{table}

\begin{figure} 
\centering
\includegraphics[width=0.95\columnwidth]{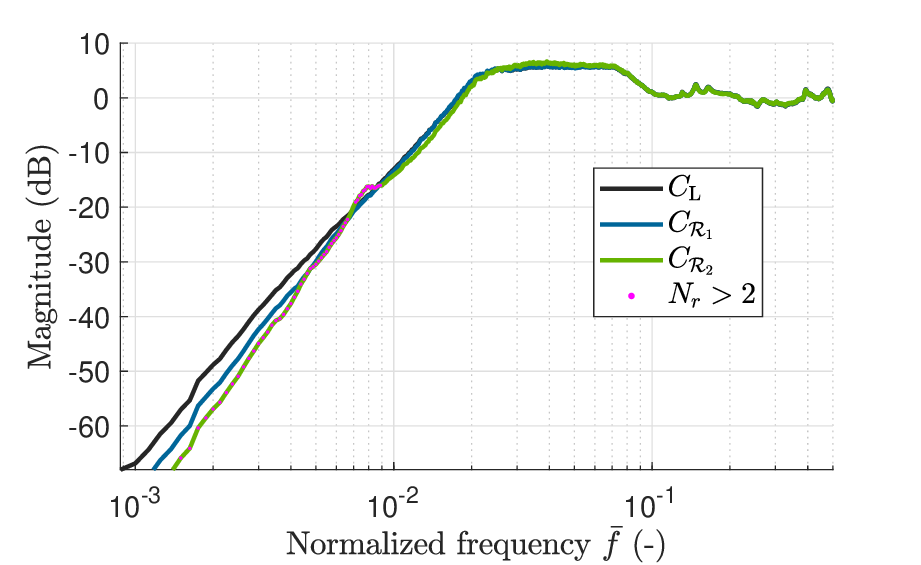}
\caption{The (pseudo)-sensitivity magnitude of the linear (\(C_\text{L}\)) controller, reset controller \(C_{\mathcal{R}_1}\) and reset controller \(C_{\mathcal{R}_2}\). }
\label{fig: Sensitivity all}
\end{figure}

To illustrate the effect of further increasing the integral corner frequency beyond that achieved with \(C_{\mathcal{R}_1}\), a second reset controller, denoted by \(C_{\mathcal{R}_2}\), is considered. In this case, the integral corner frequency is increased to \(
\omega_i = 2\pi \times 11.875 \times 10^{-3}\ \text{rad/s},
\) which is nearly three times that of the linear controller. Consequently, a larger amount of nonlinear phase compensation is required from the CgLp element. The pseudo-sensitivity corresponding to \(C_{\mathcal{R}_2}\) is also shown in Fig. \ref{fig: Sensitivity all}. Although this controller achieves the highest \(\omega_i\) among all considered designs, its sensitivity magnitude exceeds that of the linear controller over part of the frequency range. This indicates that merely increasing \(\omega_i\) by introducing stronger nonlinear phase compensation does not necessarily yield improved closed-loop performance, as the nonlinear effects induced by the reset element may compromise the anticipated performance gains.

Furthermore, the regions marked with red dots indicate frequency intervals in which more than two reset instants occur within a single period. In these regions, the frequency-domain closed-loop prediction is no longer reliable, as it is derived under the assumption of exactly two zero crossings per period. 

\begin{table} \centering
\caption{Parameters of the fourth-order shaping filter \(C_{s_4}\) designed for the reset element, with \(\omega_r = 2\pi \times 7.72 \times 10^{-3}\,\mathrm{rad/s}\). Here, \(a_0 = 1\) and \(b_0 = 1\).}
\label{tab: Cs4 parameters}
\resizebox{0.7\columnwidth}{!}{%
\begin{tabular}{cccccccc}

          & {$a_1$} &{$a_2$} & {$a_3$} & {$a_4$} \\ \hline
      & {$0.7178$}        &{$0.4034$}& {$0.1418$}                &{$0.0269$} \\ \hline
      \\
      &{$b_1$}& {$b_2$} &{$b_3$}           &{$b_4$}    \\ \hline
      &{$1.7178$}&{$0.9976$} &{$0.3329$}         & {$0.0537$} \\
\hline
\end{tabular}
}
\end{table}

\subsubsection{Reset control with shaping filter}
Here, we reconsider the reset controller \(C_{\mathcal{R}_2}\) introduced in the previous subsection. 
All controller parameters are kept unchanged, but \(C_s=1\) is replaced by a fourth-order shaping filter in the reset-triggering path.  The resulting shaped reset controller is denoted by \(C_{\mathcal{R}_2}^{\mathrm{SF}}\). As shown previously, \(C_{\mathcal{R}_2}\) achieves a higher integrator corner frequency, \(\omega_i = 2\pi \times 11.875 \times 10^{-3}\ \mathrm{rad/s}\), but at the cost of large HOSIDF magnitudes. To reduce this nonlinear contribution, a fourth-order shaping filter is introduced. This order is selected because first-, second-, and third-order shaping filters were found insufficient to reduce the nonlinearity to the desired level for this controller, whereas higher-order filters provided no substantial additional benefit while unnecessarily increasing the controller order and implementation complexity. The shaping filter is given by
\begin{equation}
    \label{eq: Cs4}
    C_{s_4}(s)=
    \frac{\sum_{k=0}^{4} a_k\,\omega_r^{-k} s^{k}}
         {\sum_{k=0}^{4} b_k\,\omega_r^{-k} s^{k}},
\end{equation}
where \(\omega_r = 2\pi \times 7.72 \times 10^{-3}\ \text{rad/s}.\)
The filter is designed to reduce the degree of nonlinearity in the frequency regions characterized by more than two zero-crossings, thereby improving the behavior of both the error signal \(e(t)\) and the reset input \(e_r(t)\). Following Theorem~\ref{theorem: shaping filter}, the corresponding parameter values are listed in Table~\ref{tab: Cs4 parameters}. The shaping filter itself, together with the base linear transfer function of the reset element, i.e., \(R(j\omega)\) from \eqref{eq RCS bls} with \(D_r=0\), is shown in Fig.~\ref{fig: Cs4}. The filter is designed such that, in the frequency region where the nonlinearity level is excessively high and multiple zero-crossings occur for \(C_{\mathcal{R}_2}\), its phase matches that of the reset element. Consequently, \(e_r(t)\) and \(u_r(t)\) from Fig.~\ref{fig: reset block diagram} become approximately in phase. This implies that when \(e_r(t)\) crosses zero, the reset output \(u_r(t)\) is also close to zero. Therefore, the reset jump is either eliminated or becomes negligibly small at those frequencies, which in turn reduces the nonlinearity.

\begin{figure}
\centering
\includegraphics[width=0.95\columnwidth]{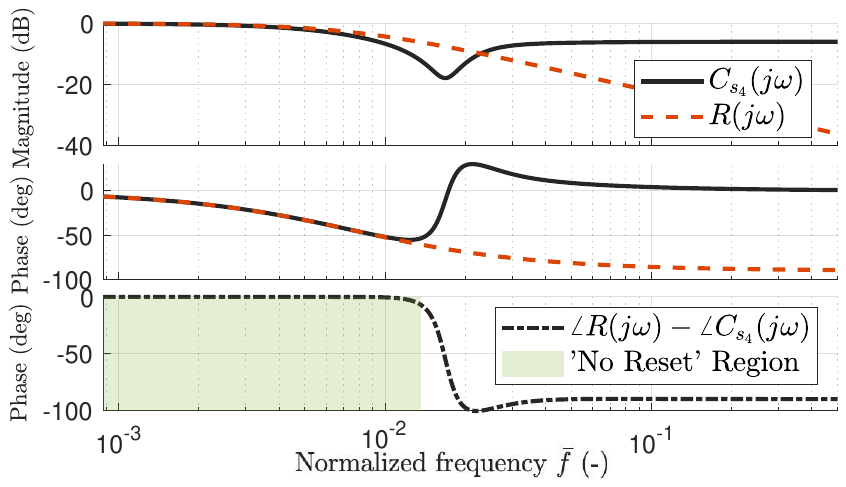}
\caption{The designed fourth-order shaping filter for the reset controller \(C_{\mathcal{R}_2}^{\mathrm{SF}}\).}
\label{fig: Cs4}
\end{figure}

\begin{figure}
\centering
\includegraphics[width=0.95\columnwidth]{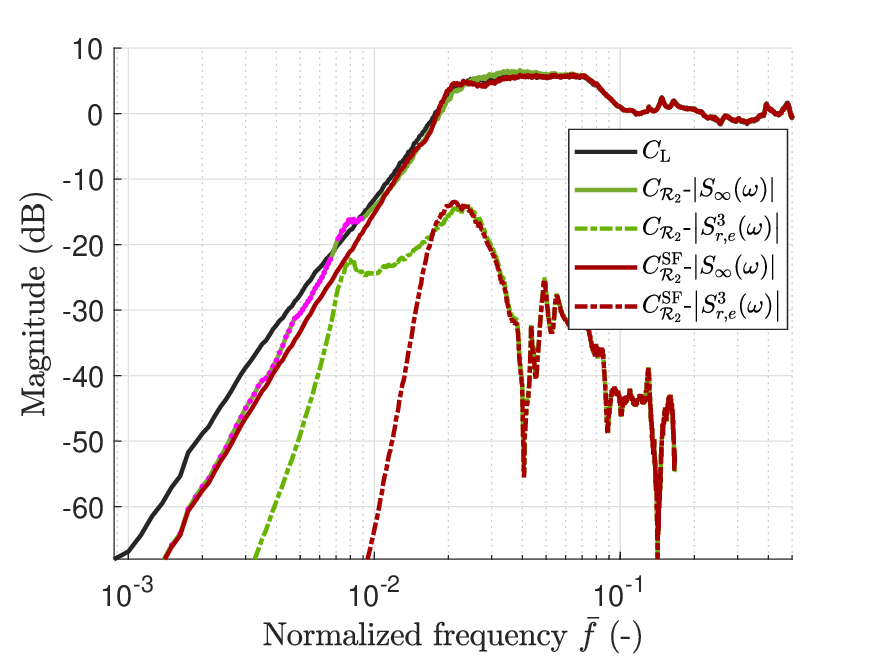}
\caption{The (pseudo-)sensitivity magnitudes of the linear controller \(C_\text{L}\), the reset controller \(C_{\mathcal{R}_2}\) and reset controller $C_{\mathcal{R}_2}^{\mathrm{SF}}$ including the shaping filter, together with their corresponding third-order sensitivities.}
\label{fig: Sensitivity SF}
\end{figure}

Fig. \ref{fig: Sensitivity SF} shows the pseudo-sensitivity of the controllers \(C_{\mathcal{R}_2}\) and $C_{\mathcal{R}_2}^{\mathrm{SF}}$. It can be observed that, when the shaping filter is included, no frequency region exhibits more than two zero-crossings. As a result, the sensitivity magnitude remains below that of the unshaped case across the entire frequency range. To further illustrate how the shaping filter mitigates the nonlinear behavior, the third-order sensitivity of the error signal, \(\left|S_{r,e}^{3}(\omega)\right|\), is also shown for designed controllers. The results indicate that the HOSIDF magnitude is reduced in the frequency regions where the phase of the shaping filter is properly aligned with that of the base linear dynamics of the reset element (the ``No-Reset region'' indicated in Fig. \ref{fig: Cs4}).

\subsection{Experimental Validation}
In this section, the controllers \(C_{\mathrm{L}}\), \(C_{\mathcal{R}_1}\), \(C_{\mathcal{R}_2}\) and $C_{\mathcal{R}_2}^{\mathrm{SF}}$ are experimentally implemented on the considered wire bonder. Here, \(C_{\mathrm{L}}\) denotes the benchmark linear controller, \(C_{\mathcal{R}_1}\) represents the best-performing reset controller without a shaping filter, and \(C_{\mathcal{R}_2}^{\mathrm{SF}}\) corresponds to the reset controller augmented with the fourth-order shaping filter in \eqref{eq: Cs4}.

Based on the frequency-domain analysis reported in Fig.~\ref{fig: Sensitivity SF}, three excitation frequencies are selected to experimentally characterize the closed-loop behavior of the wire bonder under a sinusoidal reference input,
\(
r(t)=\hat{r}\sin\left(2\pi f_{\mathrm{in}}t\right).
\)
The first frequency, \(f_{\mathrm{in}}=2.5\times10^{-3}\,\)Hz, is chosen in the low-frequency range to verify the improvement in tracking performance without introducing additional nonlinear effects. The second frequency, \(f_{\mathrm{in}}=7.5\times10^{-3}\,\)Hz, lies in the mid-frequency range, where the nonlinear behavior of the conventional reset element was found to be most pronounced. The third frequency is selected close to the closed-loop crossover frequency \(f_{\mathrm{in}}=2.5\times10^{-2}\,\)Hz in order to verify that the predicted robustness margin is preserved for all controllers. All measured results are again normalized; therefore, the amplitude \(\hat{r}\) is omitted from the presentation. Nevertheless, to provide physical context, the resulting tracking errors may be interpreted relative to the sub-micrometer positioning-accuracy requirements of the machine.

In Fig. \ref{fig: main measurments}, the measured error signals, \(e(t)\), and the measured reset output signals, \(u_r(t)\), are presented for three selected input frequencies. The error responses in Figs. \ref{fig:e20}, \ref{fig:e60}, and \ref{fig:e200} show that the reset controller \(C_{\mathcal{R}_2}^{\mathrm{SF}}\) achieves a smaller error in the low- and mid-frequency regions, while exhibiting a comparable error magnitude to the other controllers near the sensitivity peak. Furthermore, the reset output signals in Figs. \ref{fig:ur20}, and \ref{fig:ur60} show that, for \(C_{\mathcal{R}_2}^{\mathrm{SF}}\), zero crossings of \(e_r(t)\) either disappear or occur when the reset state is close to zero. Consequently, the reset jump becomes negligible, and the reset element behaves close to its base linear dynamics, whereas reset action is still activated at \(2.5\times10^{-2}\,\mathrm{Hz}\). This experimentally validates the theoretical results presented in Section \ref{Sec: Smart SF}, showing that the proposed shaping-filter design can render the reset element nearly inactive in frequency regions where reset-induced nonlinearity is not required, while preserving its effectiveness in the frequency region where reset action is desired, as demonstrated here at \(2.5\times10^{-2}\,\mathrm{Hz}\).

\begin{figure*}
    \centering

    \begin{subfigure} {0.48\linewidth}
        \centering
        \includegraphics[width=\linewidth]{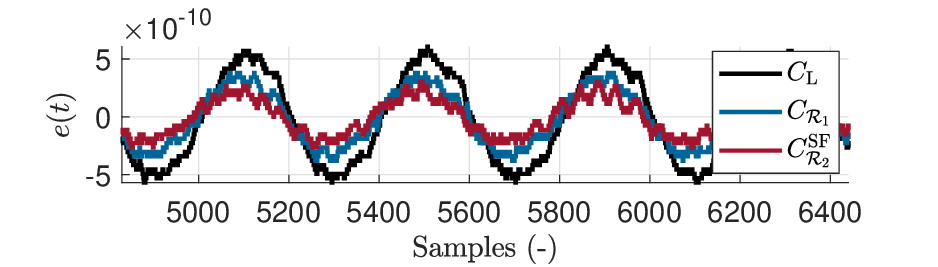}
        \caption{Error at $2.5\times10^{-3}$ Hz}
        \label{fig:e20}
    \end{subfigure}
    \hfill
    \begin{subfigure} {0.48\linewidth}
        \centering
        \includegraphics[width=\linewidth]{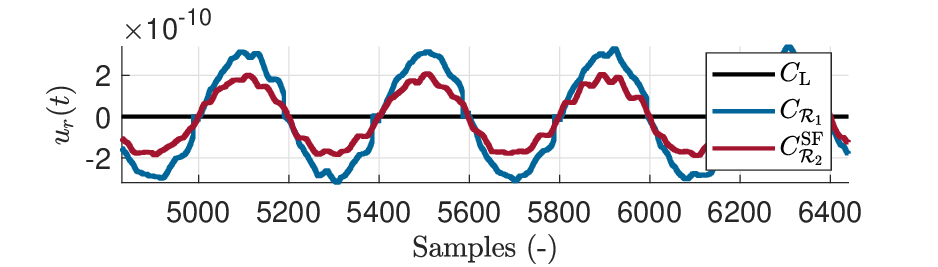}
        \caption{Reset output at $2.5\times10^{-3}$ Hz}
        \label{fig:ur20}
    \end{subfigure}

    \vspace{0.5em}

    \begin{subfigure} {0.48\linewidth}
        \centering
        \includegraphics[width=\linewidth]{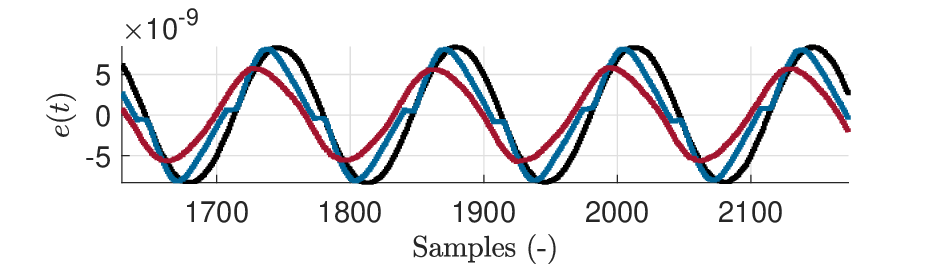}
        \caption{Error at $7.5\times10^{-3}$ Hz}
        \label{fig:e60}
    \end{subfigure}
    \hfill
    \begin{subfigure} {0.48\linewidth}
        \centering
        \includegraphics[width=\linewidth]{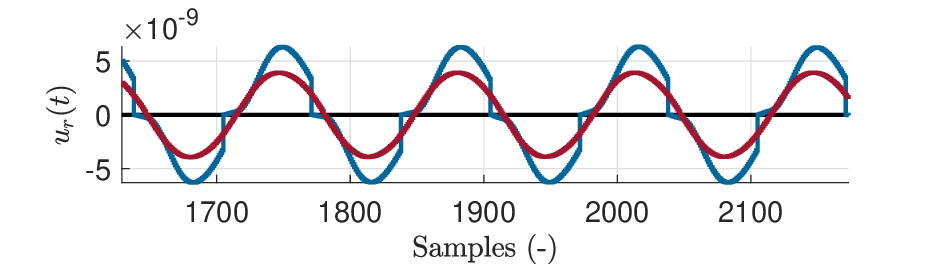}
        \caption{Reset output at $7.5\times10^{-3}$ Hz}
        \label{fig:ur60}
    \end{subfigure}

    \vspace{0.5em}

    \begin{subfigure}{0.48\linewidth}
        \centering
        \includegraphics[width=\linewidth]{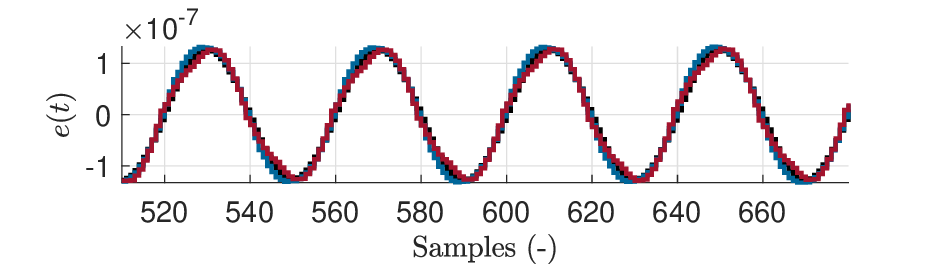}
        \caption{Error at $2.5\times10^{-2}$ Hz}
        \label{fig:e200}
    \end{subfigure}
    \hfill
    \begin{subfigure}{0.48\linewidth}
        \centering
        \includegraphics[width=\linewidth]{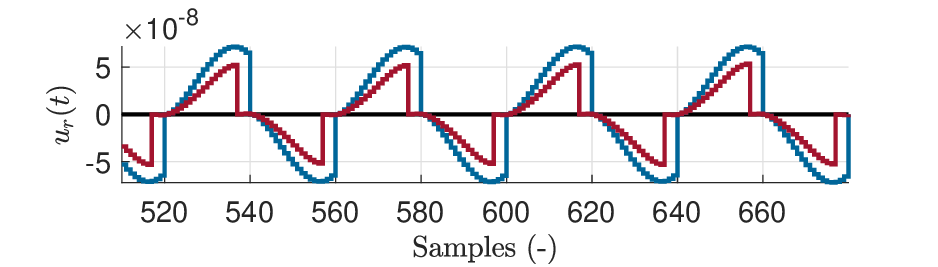}
        \caption{Reset output at $2.5\times10^{-2}$ Hz}
        \label{fig:ur200}
    \end{subfigure}

   \caption{Normalized error ($e(t)$) and reset output ($u_r(t)$) for \(C_{\mathrm{L}}\), \(C_{\mathcal{R}_1}\), and $C_{\mathcal{R}_2}^{\mathrm{SF}}$, shown at frequencies of (a,b) $2.5\times10^{-3}$ Hz, (c,d) $7.5\times10^{-3}$ Hz, and (e,f) $2.5\times10^{-2}$ Hz.}
    \label{fig: main measurments}
\end{figure*}

To further assess the effectiveness of the proposed shaping filters in reducing nonlinear effects, the power spectral density (PSD) and cumulative power spectral density (CPSD) of the error signals for \(f_{\mathrm{in}}=7.5\times10^{-3}\,\mathrm{Hz}\) are shown in Fig. \ref{fig: CPSD and PSD}. As observed from Fig. \ref{fig: CPSD60}, \(C_{\mathcal{R}_2}^{\mathrm{SF}}\) provides a clear improvement in error reduction compared with the other controllers. In Fig. \ref{fig: PSD60}, the PSD of the error signals is shown after the first spectral component, in order to emphasize the contributions associated with the higher-order harmonics (i.e., \(3f_{\mathrm{in}}\), \(5f_{\mathrm{in}}\), \(\ldots\)). Although \(C_{\mathcal{R}_2}^{\mathrm{SF}}\) employs a stronger reset element to compensate for a higher integrator corner frequency, it exhibits significantly lower higher-order harmonic content than \(C_{\mathcal{R}_1}\). This reduction is attributed to the proposed shaping filter, which effectively suppresses reset-induced nonlinearities while retaining the desired reset action in the targeted frequency range.

To explicitly demonstrate the effectiveness of the shaping filter, Fig.~\ref{fig: erNoCs_comparison} compares the measured reset-triggering signal \(e_r(t)\) of controllers \(C_{\mathcal{R}_2}\) and $C_{\mathcal{R}_2}^{\mathrm{SF}}$. The results indicate that, across all analyzed input frequencies, the shaping filter improves the behavior of the reset-triggering signal. This is important because higher-order harmonics, noise, or other unwanted frequency components in \(e_r(t)\) can introduce multiple zero crossings and consequently degrade the closed-loop performance. In particular, for \(f_{\mathrm{in}}=2.5\times10^{-3}\,\mathrm{Hz}\) and \(f_{\mathrm{in}}=7.5\times10^{-3}\,\mathrm{Hz}\), the shaping filter effectively suppresses excessive zero crossings, thereby improving the reliability of the reset control system. This improvement is especially evident for \(f_{\mathrm{in}}=7.5\times10^{-3}\,\mathrm{Hz}\), where the controller \(C_{\mathcal{R}_2}\) exhibits six zero crossings within one period.

To complement the time-domain and frequency-domain plots, 
Table~\ref{tab:experimental_metrics} reports quantitative performance 
metrics for all considered controllers. For each excitation frequency, 
the RMS value of the normalized tracking error is computed as
\begin{equation}
    {e}_{\mathrm{RMS}}
    =
    \sqrt{\frac{1}{N}\sum_{k=1}^{N}{e}^2[k]},
\end{equation}
To quantify the cumulative higher-frequency content in the reset-triggering
signal, the CPSD after the fundamental excitation frequency is evaluated. For
each controller \(C\), this quantity is computed as
\begin{equation}
    \mathrm{CPSD}_{e_r,C}^{> f_{\mathrm{in}}}
    =
    \int_{ f_{\mathrm{in}}^{+}}^{ f_{\mathrm{Nyq}}}
    \mathrm{PSD}_{e_r,C}(\nu)\,d\nu ,
\end{equation}
where \( f_{\mathrm{in}}^{+}\) denotes a frequency slightly above the fundamental excitation frequency, chosen to exclude the dominant fundamental peak, and $f_{\mathrm{Nyq}}$ is the normalized Nyquist frequency. The reported value is normalized with respect to the corresponding value obtained for
\(C_{\mathcal{R}_1}\), i.e.,
\begin{equation}
   \frac{
    \mathrm{CPSD}_{e_r,C}^{> f_{\mathrm{in}}}
    }{
    \mathrm{CPSD}_{e_r,C_{\mathcal{R}_1}}^{>f_{\mathrm{in}}}
    } .
\end{equation}
Therefore, the value for \(C_{\mathcal{R}_1}\) is equal to one, while values below one indicate a reduction of the cumulative higher-frequency content in
\(e_r(t)\) compared with \(C_{\mathcal{R}_1}\).

The results in Table~\ref{tab:experimental_metrics} show that increasing the reset-based phase compensation without a shaping filter can reduce the tracking error at low and intermediate excitation frequencies, but it also increases the cumulative higher-frequency content in the reset-triggering signal, resulting in unreliability of the predicted performance. This is observed for \(C_{\mathcal{R}_2}\), whose CPSD ratio is larger than one for all tested frequencies, and particularly large at \({f}_{\mathrm{in}}=7.5\times 10^{-3}\). In contrast, the proposed shaped controller \(C_{\mathcal{R}_2}^{\mathrm{SF}}\) consistently yields CPSD ratios below one, indicating that the shaping filter suppresses the post-fundamental content of \(e_r(t)\) relative to \(C_{\mathcal{R}_1}\). At the same time, \(C_{\mathcal{R}_2}^{\mathrm{SF}}\) achieves the lowest normalized RMS tracking error among the tested controllers for all three excitation frequencies. This confirms that the shaping filter improves the reliability of the reset action while preserving, and in these experiments improving, the tracking performance.

Overall, the experimental results confirm that the proposed shaping-filter-based reset controller \(C_{\mathcal{R}_2}^{\mathrm{SF}}\) improves the tracking performance while effectively suppressing undesired reset-induced nonlinearities. The reset action is selectively activated in the intended frequency region, thereby preserving the desired phase-compensation benefit without introducing excessive higher-order harmonic content. These results validate the practical applicability of the proposed design presented in Section~\ref{Sec: Smart SF} on the considered wire-bonder system.

\begin{table}[t]
\centering
\caption{Experimental performance comparison of the considered controllers at different excitation frequencies. The RMS value is computed from the normalized tracking error \({e}(t)\). The CPSD ratio is computed from the reset-triggering signal \(e_r(t)\) after the fundamental excitation frequency and is normalized by the corresponding value of \(C_{\mathcal{R}_1}\).}
\label{tab:experimental_metrics}
\begin{tabular}{cccc}
\toprule
$f_\mathrm{in}$ & Controller 
& \({e}_{\mathrm{RMS}}\) 
& \( \frac{
    \mathrm{CPSD}_{e_r,C}^{> f_{\mathrm{in}}}
    }{
    \mathrm{CPSD}_{e_r,C_{\mathcal{R}_1}}^{>f_{\mathrm{in}}}
    }\) \\
\midrule

\multirow{4}{*}{\( 2.5\times 10^{-3}\)}
& \(C_{\mathrm{L}}\) 
& \(3.735\times 10^{-10}\) & -- \\
& \(C_{\mathcal{R}_1}\) 
& \(2.367\times 10^{-10}\) & \(1\) \\
& \(C_{\mathcal{R}_2}\) 
& \(1.531\times 10^{-10}\) & \(1.463\) \\
& \(C_{\mathcal{R}_2}^{\mathrm{SF}}\) 
& \(1.520\times 10^{-10}\) & \(0.1856\) \\
\midrule

\multirow{4}{*}{\( 7.5\times 10^{-3}\)}
& \(C_{\mathrm{L}}\) 
& \(5.932\times 10^{-9}\) & -- \\
& \(C_{\mathcal{R}_1}\) 
& \(5.025\times 10^{-9}\) & \(1\) \\
& \(C_{\mathcal{R}_2}\) 
& \(4.285\times 10^{-9}\) & \(4.412\) \\
& \(C_{\mathcal{R}_2}^{\mathrm{SF}}\) 
& \(3.908\times 10^{-9}\) & \(0.0641\) \\
\midrule

\multirow{4}{*}{\( 2.5\times 10^{-2}\)}
& \(C_{\mathrm{L}}\) 
& \(9.018\times 10^{-8}\) & -- \\
& \(C_{\mathcal{R}_1}\) 
& \(9.678\times 10^{-8}\) & \(1\) \\
& \(C_{\mathcal{R}_2}\) 
& \(1.000\times 10^{-7}\) & \(1.275\) \\
& \(C_{\mathcal{R}_2}^{\mathrm{SF}}\) 
& \(8.921\times 10^{-8}\) & \(0.1651\) \\
\bottomrule
\end{tabular}
\end{table}

\subsubsection{Implementation Aspects of Linear and Reset Controllers}

The linear and reset controllers considered in this study are implemented digitally. All LTI components are discretized using the Tustin approximation, which is known to provide favorable phase preservation of the corresponding continuous-time dynamics compared with alternative discretization methods, particularly over a broad frequency range up to frequencies close to the Nyquist limit~(\cite{aastrom2013computerTustin}). Accordingly, the frequency-response results reported for the linear components are also evaluated in discrete time, i.e., based on the discretized frequency-response functions. The reset element is also discretized using the Tustin method. The resulting discrete-time realization of the GFORE and its implementation details are discussed in~\cite[Section~V.B]{hosseini2025AddOnFilterDesign}.

Since the frequency axis is normalized throughout the analysis, it is useful to note that the sampling frequency used in the experimental implementation is on the order of \(10\,\mathrm{kHz}\). This information facilitates the physical interpretation of the reported frequency-domain results. Furthermore, as discussed earlier, the measured tracking errors should be interpreted in relation to the sub-micrometer positioning accuracy requirements of the considered industrial motion system.

Also, closed-loop stability of the implemented reset controllers was verified using the same stability assessment as in \cite{dastjerdi2023frequency}, and all experimental tests were conducted within the stable operating regime. A full stability synthesis for shaping filter-based reset controllers is beyond the scope of this paper.

\begin{figure}
\centering
\subfloat[]{\includegraphics[width=0.495\columnwidth]{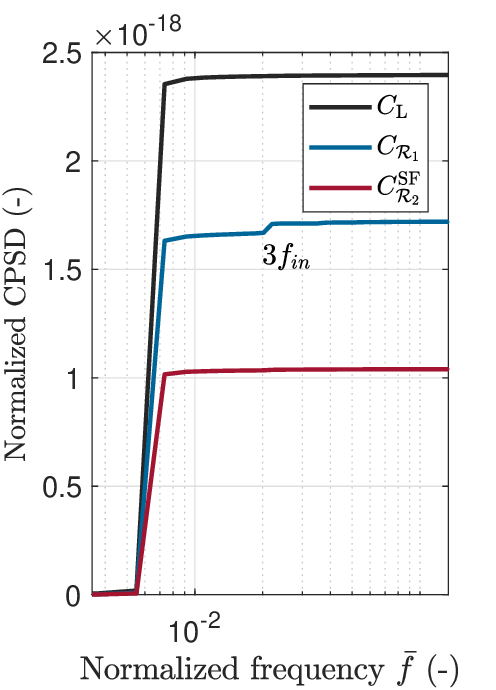}%
\label{fig: CPSD60}}
\hfil
\subfloat[]{\includegraphics[width=0.47\columnwidth]{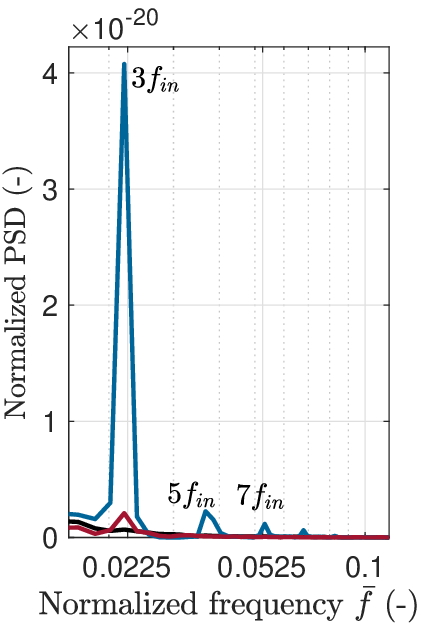}%
\label{fig: PSD60}}
\caption{(a) Cumulative PSD of the errors for an input at $7.5\times10^{-3}$ Hz. (b) PSD of the errors after the first jump in the cumulative PSD at $7.5\times10^{-3}$ Hz, shown to compare the presence of higher-order harmonics.}
\label{fig: CPSD and PSD}
\end{figure}

\begin{figure}
    \centering

    \begin{subfigure}{\linewidth}
        \centering
        \includegraphics[width=\linewidth]{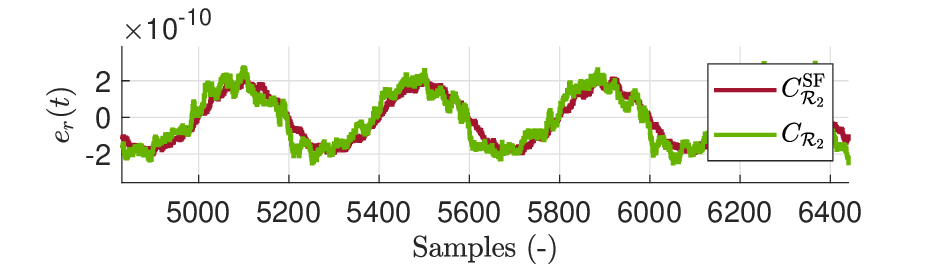}
        \caption{$2.5\times10^{-3}\,$Hz}
        \label{fig: erNoCs_20}
    \end{subfigure}

    \begin{subfigure}{\linewidth}
        \centering
        \includegraphics[width=\linewidth]{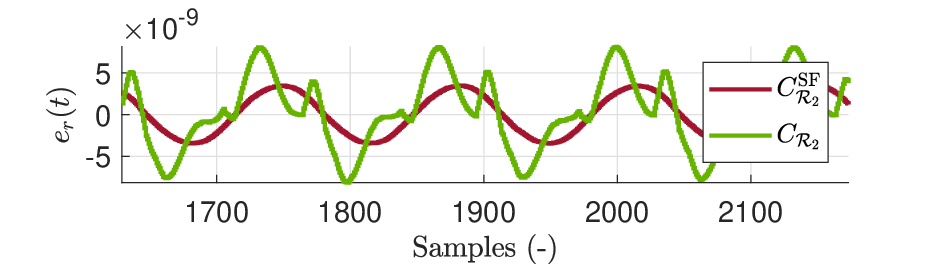}
        \caption{$7.5\times10^{-3}\,$Hz}
        \label{fig: erNoCs_60}
    \end{subfigure}

    \begin{subfigure}{\linewidth}
        \centering
        \includegraphics[width=\linewidth]{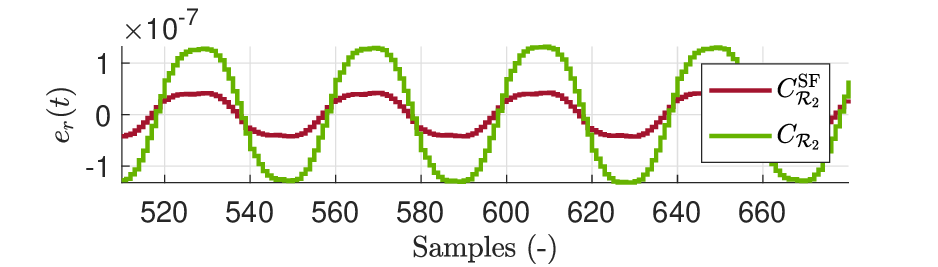}
        \caption{$2.5\times10^{-2}\,$Hz}
        \label{fig: erNoCs_200}
    \end{subfigure}

    \caption{Measured reset-triggering signal \(e_r(t)\) for controllers \(C_{\mathcal{R}_2}\) and $C_{\mathcal{R}_2}^{\mathrm{SF}}$, comparing the cases with and without the proposed shaping filter $C_{s_4}$.}
    \label{fig: erNoCs_comparison}
\end{figure}

%% file: Sections/06Conclusion.tex
\section{Conclusion}\label{sec: conclusion}
This paper proposes a systematic shaping filter design framework for the generalized first-order reset element to reduce undesired reset-induced nonlinearities. By deriving explicit conditions on the shaping-filter coefficients, the proposed method increases the low-frequency attenuation slope of the HOSIDFs as the filter order increases, thereby providing stronger suppression of higher-order harmonics in frequency regions where reset action is not desired.

The effectiveness of the method was demonstrated through frequency-domain analysis, a superposition-law assessment, and experimental validation on an industrial wire-bonder motion stage. The results show that the proposed shaping-filter-based reset controller improves tracking performance, suppresses excessive reset actions, and enhances the predictability and practical reliability of reset control systems. In the wire-bonder experiments, the proposed shaped reset controller reduced the RMS tracking error by up to approximately \(59\%\) compared with the linear benchmark and reduced the post-fundamental reset-triggering CPSD by more than \(80\%\) relative to the best unshaped reset controller. Future work will focus on automated tuning of the shaping filter parameters.

%% file: Sections/08Appendix_B.tex
\section{Proof of Theorem \ref{theorem: shaping filter}}\label{App: pf of theorem}
In this proof, we show that for the given shaping filter, with the corresponding coefficients $a_k$ and $b_k$, the resulting value is $\upsilon_0 = 2p + 1$, where, according to Lemma~\ref{lem: slop Upsilon}, we have
\begin{equation}
    \label{eq: var upsilon 0 pf1}
    \upsilon_0=\lim_{\omega \to 0} 
\frac{d\,\log_{10}|\Upsilon(\omega)|}{d\,\log_{10}\omega}.
\end{equation}
Therefore, from \eqref{eq: HOSIDFs all functions} and $A_r=-\omega_r$, we have
\begin{equation}
    \label{eq: mag Upsilon}
\Upsilon(\omega)=\omega\cos\left(\varphi(\omega)\right) +\omega_r \sin\left(\varphi(\omega)\right).
\end{equation}
Since $\varphi(\omega)=\arg\left(C_s(j\omega)\right)$, we have $\varphi(\omega)=\operatorname{atan2}\left(\zeta(\omega)\right)$, with
\begin{equation}
\label{eq: zeta}
\zeta(\omega)=\frac{\ImPart{C_s(j\omega)}}{\RePart{C_s(j\omega)}},
\end{equation}
where $\RePart{.}$ and $\ImPart{.}$ denote the real and imaginary parts of a complex value, respectively. Having $C_s(j\omega)$ from \eqref{eq: Cs}, for simplicity in this proof we consider $\alpha_k=a_k\omega_r^{-k}$ and $\beta_k=b_k\omega_r^{-k}$, and having $s=j\omega$, it gives
\begin{equation}
    \label{eq: Cs proof}
    C_s(j\omega) = 
    \frac{\sum_{k=0}^{p} \alpha_k (j\omega)^{k}}
         {\sum_{k=0}^{p} \beta_k (j\omega)^{k}}.
\end{equation}
In the following, we calculate the real and imaginary parts of $C_s(j\omega)$. Having the term $(j)^k$ as
\begin{equation}
    (j)^k = \cos\!\left(\frac{\pi}{2}k\right) + j \sin\!\left(\frac{\pi}{2}k\right),
\end{equation}
the $C_s(j\omega)$ in \eqref{eq: Cs proof} can be written as
\begin{equation}
    \label{eq: Cs proof v2}
    C_s(j\omega) = 
    \frac{\sum_{k=0}^{p} \alpha_k \omega^k\cos\!\left(\frac{\pi}{2}k\right)+j\sum_{k=0}^{p} \alpha_k \omega^k\sin\!\left(\frac{\pi}{2}k\right)}
         {\sum_{k=0}^{p} \beta_k \omega^k\cos\!\left(\frac{\pi}{2}k\right)+j\sum_{k=0}^{p} \beta_k \omega^k\sin\!\left(\frac{\pi}{2}k\right)}.
\end{equation}
Thus, we can write
\begin{flalign}
\RePart{&C_s(j\omega)} =
\notag\\
&\frac{
\sum_{l=0}^p \sum_{k=0}^p
\alpha_k \beta_l \omega^{k+l}
\cos\left(\frac{(k-l)\pi}{2}\right)
}
{
\left( \sum_{k=0}^p \beta_k \omega^k
\cos\left(\frac{\pi}{2}k\right) \right)^2
+
\left( \sum_{k=0}^p \beta_k \omega^k
\sin\left(\frac{\pi}{2}k\right) \right)^2
}, &&
\label{eq: real Cs}
\end{flalign}
and
\begin{flalign}
\ImPart{&C_s(j\omega)} =
\notag\\
&\frac{\sum_{l=0}^p \sum_{k=0}^p \alpha_k  \beta_l \omega^{k+l}\sin\left(\frac{(k-l)\pi}{2}\right)}
{\left( \sum_{k=0}^p \beta_k \omega^k \cos\left(\frac{\pi}{2}k\right) \right)^2
+
\left( \sum_{k=0}^p \beta_k \omega^k \sin\left(\frac{\pi}{2}k\right) \right)^2}. &&
\label{eq: imag Cs}
\end{flalign}
Now having the real and imaginary parts of $C_s(j\omega)$, we rewrite \eqref{eq: mag Upsilon} as
\begin{equation}
    \label{eq: mag Upsilon v2}
\Upsilon(\omega)=\omega\cos\left(\operatorname{atan2}\left(\zeta(\omega)\right)\right) +\omega_r \sin\left(\operatorname{atan2}\left(\zeta(\omega)\right)\right).
\end{equation}
Using $\cos(\operatorname{atan2} (x))=\tfrac{1}{\sqrt{1+x^2}}$ and $\sin\left(\operatorname{atan2} (x)\right)=\tfrac{x}{\sqrt{1+x^2}}$, \eqref{eq: mag Upsilon v2} is rewritten as
\begin{equation}
    \label{eq: mag Upsilon v3}
\Upsilon(\omega)=\frac{\omega+\omega_r\zeta(\omega)}{\sqrt{1+\zeta(\omega)^2}},
\end{equation}
and replacing $\zeta(\omega)=\frac{\ImPart{C_s(j\omega)}}{\RePart{C_s(j\omega)}}$ we get
\begin{equation}
    \label{eq: mag Upsilon v4}
\Upsilon(\omega)=\frac{\omega\RePart{C_s(j\omega)}+\omega_r\ImPart{C_s(j\omega)}}{\RePart{C_s(j\omega)}\sqrt{1+\frac{\ImPart{C_s(j\omega)}}{\RePart{C_s(j\omega)}}}}.
\end{equation}
Regarding \eqref{eq: var upsilon 0 pf1}, at the end, we are interested in $\omega\rightarrow0$. Thus having $\lim_{\omega \to 0}\RePart{C_s(j\omega)}=\alpha_0$ and $\lim_{\omega \to 0}\ImPart{C_s(j\omega)}=0$, we can write
\begin{equation}
    \label{eq: mag Upsilon v5}
\lim_{\omega \to 0}\Upsilon(\omega)=\frac{1}{\alpha_0}\lim_{\omega \to 0}{\omega\RePart{C_s(j\omega)}+\omega_r\ImPart{C_s(j\omega)}}.
\end{equation}
Since the denominator of both real and imaginary parts from \eqref{eq: real Cs} and \eqref{eq: imag Cs} are equal to $\beta_0$ as $\omega\rightarrow0$, we have
\begin{equation}
    \label{eq: mag Upsilon v6}
\lim_{\omega \to 0}\Upsilon(\omega)=\frac{1}{\alpha_0\beta_0}\lim_{\omega \to 0}\Pi(\omega),
\end{equation}
where
\begin{align}
    \label{eq: Pi}
    \Pi(\omega)=&\sum_{l=0}^p \sum_{k=0}^p \alpha_k  \beta_l \omega^{k+l+1}\cos\left(\frac{(k-l)\pi}{2}\right)\nonumber\\
    &+\omega_r\sum_{l=0}^p \sum_{k=0}^p \alpha_k  \beta_l \omega^{k+l}\sin\left(\frac{(k-l)\pi}{2}\right).
\end{align}
It can be seen that because of the presence of sin and cos functions, $\Pi(\omega)$ contains only terms with $\omega^{2m+1}$ with $m=[1,p]\in\mathbb{N}$. Since $\Pi(\omega)$ directly effects the $\Upsilon(\omega)$, we aim to keep $\omega^{2p+1}$ and set all other coefficients for any terms $\omega^1,\omega^3,...,\omega^{2p-1}$ to zero to be able to get the maximum possible value for $\upsilon_0$ at \eqref{eq: var upsilon 0 pf1}. Thus, \eqref{eq: Pi} can be written as
\begin{equation}
    \label{eq: Pi 2 proof}
\Pi(\omega)=\alpha_p\beta_p\omega^{2p+1}+\sum_{m=1}^p\omega^{2m-1}Q_m(\omega)
\end{equation}
where $\forall\,\,m\in[1,p]$,
\begin{align}
\label{eq: Eq m}
    Q_m(\omega)=&\mathop{\sum_{l=0}^p \sum_{k=0}^p}_{k+l=2m-2} \alpha_k  \beta_l \cos\left(\frac{(k-l)\pi}{2}\right) \nonumber\\
&+\omega_r\mathop{\sum_{l=0}^p \sum_{k=0}^p}_{k+l=2m-1} \alpha_k  \beta_l \sin\left(\frac{(k-l)\pi}{2}\right).
\end{align}
Having $\alpha_k=a_k\omega_r^{-k}$ and $\beta_l=b_l\omega_r^{-l}$, it gives
\begin{align}
\label{eq: Eq m v3}
    Q_m(\omega)=&\omega_r^{-(2m-2)}\Bigg(\mathop{\sum_{l=0}^p \sum_{k=0}^p}_{k+l=2m-2} a_k  b_l \cos\left(\frac{(k-l)\pi}{2}\right) \nonumber\\
&+\mathop{\sum_{l=0}^p \sum_{k=0}^p}_{k+l=2m-1} a_k  b_l \sin\left(\frac{(k-l)\pi}{2}\right)\Bigg).
\end{align}
Thus, $Q_m(\omega)=0$ if
\begin{align}
    \label{eq: Eq m v4}
   \mathop{\sum_{l=0}^p \sum_{k=0}^p}_{k+l=2m-2} &a_k  b_l \cos\left(\frac{(k-l)\pi}{2}\right)\notag\\
&+\mathop{\sum_{l=0}^p \sum_{k=0}^p}_{k+l=2m-1} a_k  b_l \sin\left(\frac{(k-l)\pi}{2}\right)=0,
\end{align}
or equivalently (straightforward to show by considering $k-l=2m-2-2l$ and $k-l=2m-1-2l$),
\begin{equation}
    \label{eq: Eq m v5}
   \mathop{\sum_{l=0}^p \sum_{k=0}^p}_{k+l=2m-2} a_k  b_l (-1)^l
+\mathop{\sum_{l=0}^p \sum_{k=0}^p}_{k+l=2m-1} a_k  b_l (-1)^l=0,
\end{equation}
which is equal to equations in \eqref{eq: Eq_m Th}. Thus, if $p$ equations in \eqref{eq: Eq m v5} hold $\forall m\in [1,p]$, we get $Q_m(\omega)=0$ and consequently $\Pi(\omega)=\alpha_p\beta_p\omega^{2p+1}$, where results in
\begin{equation}
    \label{eq: mag Upsilon v7}
\lim_{\omega \to 0}\Upsilon(\omega)=\frac{\alpha_p\beta_p}{\alpha_0\beta_0}\lim_{\omega \to 0}\omega^{2p+1}.
\end{equation}
Finally, substituting \eqref{eq: mag Upsilon v7} in \eqref{eq: var upsilon 0 pf1}, gives
\begin{equation}
    \label{eq: var upsilon 0 pf2}
    \upsilon_0=\lim_{\omega \to 0} 
\frac{d\,\log_{10}\left|\frac{\alpha_p\beta_p}{\alpha_0\beta_0}\omega^{2p+1}\right|}{d\,\log_{10}\omega},
\end{equation}
where results in
\begin{equation}
    \label{eq: upsilon proof v end}
    \upsilon_0=2p+1,
\end{equation}
and therefore
\begin{equation}
    h_0 = 20(\upsilon_0+1)=40(p+1)\ \mathrm{dB/dec}.
\end{equation}
\qed

%% file: Sections/07Appendix_A.tex
\section{Practical Shaping Filter Design Workflow}
\label{app:practical_workflow}

The proposed shaping-filter design can be implemented as follows:

\begin{itemize}
    \item Select the GFORE parameters, including \(\omega_r\), \(A_\rho\),
    and the required CgLp phase contribution around the target bandwidth.

    \item Choose the shaping-filter order \(p\) according to the desired
    low-frequency HOSIDF attenuation. From
    Theorem~\ref{theorem: shaping filter},
    \[
        h_0 = 40(p+1)\ \mathrm{dB/dec}.
    \]
    A larger \(p\) gives stronger harmonic suppression, but increases the
    filter order.

    \item Define the shaping filter as
    \[
    C_s(s)=
    \frac{\sum_{k=0}^{p} a_k \omega_r^{-k}s^k}
         {\sum_{k=0}^{p} b_k \omega_r^{-k}s^k},
    \]
    with \(a_0=b_0=1\).

    \item Select \(b_1,\ldots,b_p\) such that the denominator is Hurwitz and
    the filter provides the desired phase and attenuation behavior.

    \item Compute \(a_1,\ldots,a_p\) by solving the \(p\) algebraic equations
    in Theorem~\ref{theorem: shaping filter}.

    \item Check that no pole-zero cancellation reduces the effective filter
    order.

    \item Verify the design using \(H_1(\omega)\), the HOSIDFs
    \(H_n(\omega)\), \(n\geq3\), the pseudo-sensitivity
    \(S_\infty(\omega)\), and the number of reset instants \(N_r(\omega)\).
    If needed, adjust \(p\) or the denominator coefficients and repeat the
    procedure.
\end{itemize}